\documentclass[11pt]{article}
\usepackage[utf8]{inputenc}
\usepackage[T1]{fontenc}
\usepackage[a4paper, margin=25mm]{geometry}
\usepackage{lmodern,microtype,algorithm,algpseudocode,amsthm,amsfonts,mathtools,thmtools,thm-restate,tikz,graphicx,hyperref}
\usepackage{enumitem}\setlist[enumerate,1]{label={(\alph*)}}

\usetikzlibrary{arrows.meta}

\theoremstyle{plain}
\declaretheorem[name=Theorem]{theorem}

\declaretheorem[sibling=theorem,name=Corollary]{corollary}

\declaretheorem[name=Claim]{claim}

\newcommand{\NP}{\text{NP}}
\newcommand{\DP}{\text{P}}

\newcommand{\cogem}{\text{co-gem}}
\newcommand{\cogems}{\text{co-gems}}

\newcommand{\miniaturePainting}{\textsc{Miniature Painting}}

\newcommand{\freeFloodIt}{\textsc{Free Flood-It}}

\newcommand{\area}{painting area}

\newcommand{\areas}{painting areas}
\newcommand{\recursive}{recursive}

\newcommand{\plan}{painting plan}

\newcommand{\plans}{painting plans}

\title{Fanciful Figurines flip Free Flood-It -- Polynomial-Time Miniature Painting on Co-gem-free Graphs}
\author{Christian Rosenke\\[0.5em]University of Rostock\\Germany\\[0.2cm]\texttt{\href{mailto:christian.rosenke@uni-rostock.de}{christian.rosenke@uni-rostock.de}}
	\and Mark Scheibner\\[0.5em]Independent Researcher\\Germany\\[0.2cm]\texttt{\href{mailto:mail@mark-scheibner.de}{mail@mark-scheibner.de}}}
\date{}

\begin{document}
	\maketitle
	\begin{abstract}
		\noindent
		Inspired by the eponymous hobby, we introduce \textsc{Miniature Painting} as the computational problem to paint a given graph $G=(V,E)$ according to a prescribed template $t \colon V \rightarrow C$, which assigns colors $C$ to the vertices of $G$. 
		In this setting, the goal is to realize the template using a shortest possible sequence of brush strokes, where each stroke overwrites a connected vertex subset with a color in $C$. 
		
		We show that this problem is equivalent to a reversal of the well-studied \textsc{Free Flood-It} game, in which a colored graph is decolored into a single color using as few moves as possible. 
		This equivalence allows known complexity results for \textsc{Free Flood-It} to be transferred directly to \textsc{Miniature Painting}, including $\NP$-hardness under severe structural restrictions, such as when $G$ is a grid, a tree, or a split graph. 

		Our main contribution is a polynomial-time algorithm for \textsc{Miniature Painting} on graphs that are free of induced {\cogems}, a graph class that strictly generalizes cographs. 
        As a direct consequence, \textsc{Free Flood-It} is also polynomial-time solvable on {\cogem}-free graphs, independent of the initial coloring. 
	\end{abstract} 
	
%

\section{Introduction}
\label{sec:introduction}

\emph{Miniature painting} is a popular hobby in which small metal or plastic model figures are painted in order to represent a wide variety of heroes and monsters.
The figures are three-dimensional objects of varying complexity whose outer surfaces consist of many small two-dimensional regions.
Each of them is the subject of a meticulous {\plan}, executed with the utmost precision.
In mathematics, it is quite customary to strip something wonderful of everything that sparks joy by placing it in a formal framework.
We now proceed accordingly.
\begin{figure}[!ht]
\tikzstyle{cntrst} = [draw=white, line width=0.6pt, double=black, double distance=0.6pt]
\tikzstyle{cntrstLabel} = [{Circle[length=3pt]}-, cntrst, rounded corners]
\tikzstyle{rightText} = [right = 0.3cm, font=\scriptsize]
\tikzstyle{simple} = [circle, minimum size=6mm, inner sep=0pt, fill = white, font=\small, draw = black]
\tikzstyle{vertex} = [simple, cntrst]
\tikzstyle{edge} = [line width = 0.6pt]
\tikzstyle{gr} = [fill = green!75!white, circle, minimum size=10mm, inner sep=0pt]
\tikzstyle{br} = [fill = brown!50!brown, circle, minimum size=10mm, inner sep=0pt]
\tikzstyle{lb} = [fill = brown!50!white, circle, minimum size=10mm, inner sep=0pt]
\tikzstyle{mt} = [fill = black!50!white, circle, minimum size=10mm, inner sep=0pt]
\tikzstyle{wh} = [fill = white!90!black, circle, minimum size=10mm, inner sep=0pt]
\tikzstyle{bl} = [fill = black!75!white, circle, minimum size=10mm, inner sep=0pt]
\begin{center}
\begin{tikzpicture}[baseline=1cm,yscale=0.8]
	\begin{scope}[rotate=90, scale=2]
	\path 	(0,0)	node[gr]{} node[vertex] (b)	{$b$}	
			(0,-1) 	node[br]{} node[vertex] (p)	{$p$}	
			(0,-2) 	node[gr]{} node[vertex] (f)	{$f$}	
			(1,-1) 	node[lb]{} node[vertex] (s)	{$s$}	
			(1,-2) 	node[wh]{} node[vertex] (m)	{$m$}	
			(-1,-2) node[mt]{} node[vertex] (c)	{$c$}	
			(-1,-1) node[bl]{} node[vertex] (l)	{$l$}	
			(-1,0)  node[wh]{} node[vertex] (x)	{$x$}	
			(-1,1)  node[wh]{} node[vertex] (h)	{$h$}	
			(0,1)   node[wh]{} node[vertex] (y)	{$y$}	
			(1,1)   node[bl]{} node[vertex] (r)	{$r$}	
			(1,0)   node[lb]{} node[vertex] (g)	{$g$};	
	\foreach \n in {p,s,c,l,x,h,y,r,g}{\draw[edge] (b) -- (\n);}
	\foreach \x/\y in {p/f,p/s,s/f,s/m,f/m}{\draw[edge] (\x) -- (\y);}
	\end{scope}
	\node at (1, -3.25) {(a)};
\end{tikzpicture}
\hspace*{1.5cm}
\begin{minipage}[t]{0.3\textwidth}
\begin{center}
\begin{tikzpicture}[baseline=-0.5cm]
\path 	(0,0)	node[simple] (v1)	{$v_1$}
		(1,0) 	node[simple] (v2)	{$v_2$}
		(2,0) 	node[simple] (v3)	{$v_3$}
		(3,0) 	node[simple] (v4)	{$v_4$};
	\foreach \x/\y in {v1/v2,v2/v3,v3/v4}{\draw[edge] (\x) -- (\y);}
	\node at (1.5, -1) {(b)};
\end{tikzpicture}

\vspace*{1cm}
\begin{tikzpicture}[baseline=0cm]
\path 	(0,0)	node[simple] (v1)	{$v_1$}
		(1,0) 	node[simple] (v2)	{$v_2$}
		(2,0) 	node[simple] (v3)	{$v_3$}
		(3,0) 	node[simple] (v4)	{$v_4$}
		(4,0) node[simple] (x)	{$x$};
	\foreach \x/\y in {v1/v2,v2/v3,v3/v4}{\draw[edge] (\x) -- (\y);}
	\node at (2, -1) {(c)};
\end{tikzpicture}

\end{center}
\end{minipage}
\end{center}
\caption{(a)~a graph model with template, (b)~the path $P_4$ on four vertices, (c)~the {\cogem}}
\label{fig:example}
\end{figure}
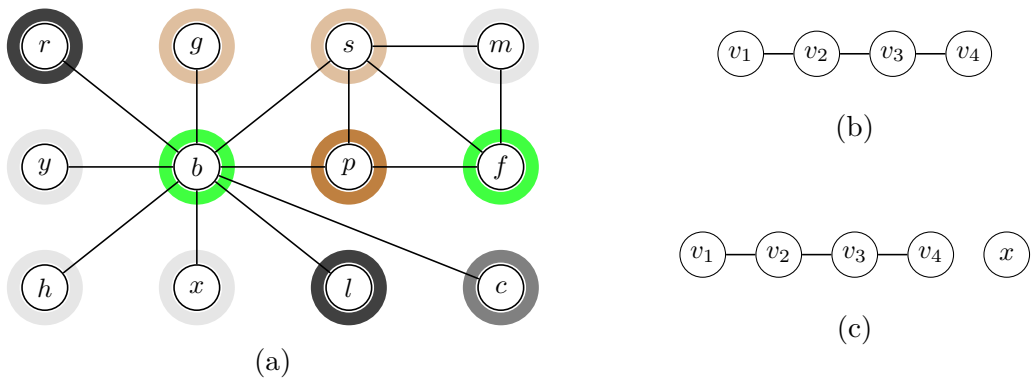

Any figurine can, somewhat drily, be represented as a graph $G=(V,E)$, with vertex set $V$ corresponding to all the surface patches, and edges $vw \in E$ indicating adjacency between patches $v$ and $w$.
The intended painting can then be captured by a \emph{template} function $t\colon V \rightarrow C$, which assigns to each vertex $v \in V$ the color $t(v) \in C$ chosen by the hobbyist for that patch.
Here, $C$ denotes the set of all available colors.
Figure~\ref{fig:example}~(a) depicts an example graph along with the laid-out painting that uses the color set $C = \{B, G, K, L, M, W\}$ to represent the six colors $B$rown, $G$reen, blac$K$, $L$ight brown, $M$etal, and $W$hite.

While the miniature painter may find fulfillment in the act of painting, the theoretical computer scientist finds it in algorithmic optimization.
Although the setting described above admits many useful optimization objectives, this paper focuses on the somewhat unconventional goal of minimizing the number of brush strokes.
The idea is to spare the painter from lifting the brush off the miniature more often than necessary.
Assuming the brush can hold a sufficient amount of paint, a single stroke is defined as the act of placing the brush onto the figurine and then moving it progressively across adjacent patches without lifting the brush tip, thereby filling all visited patches with the currently loaded color.

In the graph setting, a \emph{brush stroke} is, thus, modeled as the selection of a connected vertex set $A \subseteq V$ of $G$ and the assignment of a color $c \in C$ to all vertices of the \emph{\area} $A$.
The problem is then to compute an optimal \emph{{\plan}}, a shortest sequence $(A_1, c_1), \dots, (A_s, c_s)$ of strokes that produces the target template $t$ starting from an initially unpainted graph $G$.
The example graph $G$ of Figure~\ref{fig:example}~(a) can be painted with the given template using for instance the following plan of six strokes:
\[(V(G), W), (\{b,f,g,l,p,r,s\}, K), (\{b,f,g,p,s\}, L), (\{b,f,p\}, G), (p, B), (c,M).\]
Clearly, this plan is optimal since every used color requires at least one stroke, a lower bound that is met here.

The abstract problem outlined above probably admits a range of meaningful practical applications.
As a somewhat whimsical example, one might imagine an automated painter operating in a \emph{color-by-numbers} setting, which would benefit from an optimal {\plan} if each brush lift entails a costly and error-prone recalibration of its motion system.
In truth, however, our motivation is simply the intrinsic appeal of the underlying formal problem.

\medskip
To the best of our knowledge, this specific problem has not previously been studied in the form considered here.
Consequently, its computational complexity appears to be open, in particular under the many natural restrictions that may be imposed on the input.
Somewhat surprisingly, however, much of this question is already resolved implicitly.
Indeed, the {\miniaturePainting} problem studied in this work turns out to be equivalent to {\freeFloodIt}, which has been investigated extensively in the literature and is surveyed in several excellent overview papers~\cite{Fellows2018,Fleischer2010}.
This equivalence is unexpected because, while our objective is to colorfully \emph{paint} a graph using as few strokes as possible, {\freeFloodIt} is a game in which the player aims to \emph{decolor} a graph using as few moves as possible.
Nevertheless, as unintuitive as this correspondence may seem, the equivalence of the two problems is established in Section~\ref{sec:equivalence}.
More precisely, we show that a graph $G$ can be painted according to a template $t$ in $s$ strokes, if and only if the player can flood the graph $G$ monochromatically in $s-1$ moves, when $G$ is initially colored according to $t$.
For instance, the {\plan} for the example in Figure~\ref{fig:example}~(a) can be \emph{inverted} into the following five-move free-flooding of $G$ when initially painted with the given template:
\[(c,W), (p,G), (p,L), (p,K), (p,W).\]

A direct consequence of this equivalence is that all known complexity results for {\freeFloodIt} immediately apply to {\miniaturePainting} as well.
In particular, the problem remains computationally hard even under very strong input assumptions, for example when the input graph $G$ is restricted to \emph{co-comparability}, \emph{AT-free}, or \emph{split graphs}~\cite{Fukui2013}.
Even if $G$ is a \emph{tree}~\cite{FELLOWS2015} or a \emph{grid}~\cite{Clifford2012} and the template uses only three colors, no efficient algorithm exists unless $\DP=\NP$.
Indeed, there are very few non-trivial graph classes for which {\freeFloodIt} -- and hence {\miniaturePainting} -- is known to be tractable.
Basically, this includes just very restricted grids~\cite{MEEKS2012} and paths and cycles~\cite{MeeksScott2014}.
Moreover, it is folklore that the problem is tractable on \emph{cographs}, which arise from forbidding induced $P_4$, the path on four vertices shown in Figure~\ref{fig:example}~(b).
To the best of our knowledge, however, this has never been proven explicitly.

\medskip
As one of the few known outposts along the boundary between tractability and assumed computational hardness, cographs provide a natural starting point for extending efficient algorithms to {\miniaturePainting}.
Our aim is therefore to reintroduce induced $P_4$ into the input graph and to explore how far this extension can be carried without crossing into $\NP$-hard territory.
Accordingly, the main contribution of this paper is a polynomial-time algorithm for {\miniaturePainting} on graphs that are free of induced {\cogems}, the graph that extends the $P_4$ with one independent vertex as illustrated in Figure~\ref{fig:example}~(c).
This result, presented in Section~\ref{sec:polynomial_time_algorithm}, constitutes a substantial generalization.
The class of {\cogem}-free graphs strictly contains all cographs and also encompasses a wide range of graphs that admit induced $P_4$.
In fact, it is a simple but crucial observation that this graph class allows induced $P_4$ precisely when every such $P_4$ forms a \emph{dominating set}, that is, a vertex subset whose vertices collectively are adjacent to all other vertices in $V(G)$.
Our polynomial-time approach is fundamentally based on exploiting this structural property.
The example graph of Figure~\ref{fig:example}~(a) contains induced $P_4$ such as $r,b,s,m$, so it is not a cograph, but observe that there is no induced {\cogem}.

The equivalence established above transfers this progress directly back to {\freeFloodIt}: an optimal flooding strategy can be computed efficiently on graphs that are free of induced {\cogems}, independently of the complexity of the initial coloring.
This naturally raises the question of why one should introduce the {\miniaturePainting} problem at all, rather than studying {\freeFloodIt} directly.
Our answer is that the new formulation provides a complementary perspective on the game.
In the {\miniaturePainting} setting, the \emph{painter} -- or, equivalently, the \emph{player} -- is afforded greater freedom in choosing strokes or moves.
This proves very useful in developing the results of this work.

\smallskip
The paper is structured as follows.
The next section fixes basic concepts.
Section~\ref{sec:equivalence} establishes the equivalence between {\miniaturePainting} and {\freeFloodIt} and derives resulting complexity consequences.
In Section~\ref{sec:canonical_paint_plans}, we derive structural insights for {\cogem}-free graphs to be used in Section~\ref{sec:polynomial_time_algorithm} to develop a polynomial-time algorithm.
Section~\ref{sec:conclusion} concludes this paper.
To keep the presentation accessible, all proofs are deferred to Section~\ref{sec:proofs}.

%

\section{Preliminaries}
\label{sec:preliminaries}

Before presenting our results, we fix the notation used throughout the paper.
We mostly use standard math and graph notation.
For natural numbers $a,b$ with $a \leq b$ the set $\{a, a+1, \dots, b\}$ is denoted by $[a,b]$.
For sets $A$ and $B$, we denote the \emph{union}, \emph{difference}, and \emph{intersection} by $A+B$, $A-B$, and $A \cap B$, respectively.
Singleton sets $\{x\}$ are identified with the only contained element $x$, if this becomes clear from the context.

\subsection*{Graph Notation}

All considered graphs are finite, undirected, and without multiple edges or loops.
For every vertex $x \in V(G)$, we define the \emph{open neighborhood} $N(x) = \{y \mid xy \in E(G)\}$ and $N[x] = N(x)+x$ is the \emph{closed neighborhood}.
We say that $x$ \emph{sees} $y$, if $y \in N(x)$.
A \emph{dominating set} $D$ of a graph $G$ is a subset of $V(G)$ that is adjacent to all vertices, that is, $\bigcup_{x \in D} N[x] = V(G)$.

A \emph{path} $P_k$ on $k$ vertices is a graph on vertices $v_1, \dots, v_k$ and with exactly the edges $v_iv_{i+1}$ for all $i \in [1,k-1]$.
The vertices $v_1$ and $v_k$ are the \emph{endpoints} of the path.
If just the edge $v_kv_1$ is additionally present then we get the \emph{cycle} $C_k$ on $k$ vertices.

If $G$ is a graph then every subgraph $H=(V,E)$ of $G$ is defined by a vertex subset $V \subseteq V(G)$ and an edge subset $E \subseteq E(G)$ that consists only of edges with both endpoints in $V$.
A subgraph $H$ is \emph{induced}, if $E$ contains exactly all the edges $vw$ of $E(G)$ with $v,w \in V$.
For any vertex subset $V \subseteq V(G)$, we use $G[V]$ to denote the induced subgraph $H = (V,E)$ of $G$.
We use $G-V$ as a shorthand for $G[V(G)-S]$.
Graphs $G$ and $H$ are isomorphic, if a bijection $\sigma: V(G) \rightarrow V(H)$ exists with $xy \in E(G)$ if and only if $\sigma(x)\sigma(y) \in E(H)$.
If no induced subgraph of $G$ is isomorphic to a graph $H$ then $G$ is $H$-\emph{free}.
A subgraph $H$ of $G$, induced or not, is called \emph{dominating} $G$, if $V(H)$ is a dominating set of $G$.

As usual, an \emph{$x,y$-path} in $G$ is a subgraph of $G$ that is isomorphic to a path with endpoints $x$ and $y$.
The graph $G$ is \emph{connected}, if there is an $x,y$-path in $G$ for all distinct vertices $x,y \in V(G)$.
Otherwise, $G$ is called \emph{disconnected} and consists of \emph{connected components}, maximal induced subgraphs that are non-empty and connected.
A vertex subset $V \subseteq V(G)$ is (dis-) connected, if the subgraph $G[V]$ induced by $V$ is (dis-) connected.
Vertex sets $S \subseteq V(G)$ of connected graphs $G$ are called \emph{separator}, if $G-S$ is disconnected.

Though not central to this paper, we briefly mention some graph classes for completeness. 
The connected cycle-free graphs are the \emph{trees}. 
A graph is \emph{AT-free} if it contains no \emph{asteroidal triple}, that is, three vertices such that for every pair $x,y$ there is an $x,y$-path avoiding the neighborhood $N[z]$ of the third. 
\emph{Co-comparability graphs} are graphs whose complements admit a transitive orientation, that is, an orientation of the edges so that $u \to w$ whenever $u \to v$ and $v \to w$.
\emph{Split graphs} are graphs $G$ with a vertex set $C \subseteq V(G)$ such that $G[C]$ is a clique and $G - C$ is independent.
Finally, \emph{grid graphs} are graphs whose vertices can be placed on a two-dimensional integer lattice, with edges connecting precisely the horizontally or vertically adjacent vertices.

Like suggested above, cographs and {\cogem}-free graphs are the most important graph classes in here.
The following structural properties of graphs in these two classes are well-understood and often considered trivial observations.
However, we explicitly apply them in Sections~\ref{sec:canonical_paint_plans} and~\ref{sec:polynomial_time_algorithm} and, thus, include them  here, for the sake of being self-contained.
\begin{restatable}{observation}{ObsCographSeparatorPThree}
	\label{obs:cograph_separator_pthree}
	If $G$ is a connected cograph with separator $S$ then every pair of vertices in different components of $G-S$ have a common neighbor in $S$.
\end{restatable}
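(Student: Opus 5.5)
Let $u$ and $w$ lie in different components of $G-S$. I aim to find a common neighbor of $u$ and $w$ inside $S$, using only that $G$ is connected and has no induced $P_4$. The plan is to take a shortest $u,w$-path $P$ in $G$ and show that it has exactly three vertices.

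\textbf{Step 1: $P$ is induced.} Any shortest path is an induced subgraph, since a chord would shorten it. A shortest path exists because $G$ is connected.

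\textbf{Step 2: $P$ has at most three vertices.} Suppose $P=v_1,\dots,v_k$ with $k\ge 4$. Then $v_1,v_2,v_3,v_4$ is an induced $P_4$, because $P$ is induced. This contradicts that $G$ is a cograph, so $k\le 3$.

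\textbf{Step 3: $P$ has exactly three vertices, and its middle vertex lies in $S$.} Since $u$ and $w$ are in different components of $G-S$, they are not adjacent. Also $u\neq w$, so $k\neq 1,2$, and hence $k=3$. Write $P=u,x,w$.
\begin{itemize}
\item If $x\notin S$, then $ux$ and $xw$ are edges of $G-S$.
\item Then $u$, $x$ and $w$ would all lie in one component of $G-S$, which is impossible.
\end{itemize}
Therefore $x\in S$, and $x$ is the required common neighbor.

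The only step with real content is Step 2, which uses the forbidden induced $P_4$ applied to a shortest path. The remaining steps are routine bookkeeping about components of $G-S$.
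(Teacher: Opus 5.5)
Your proposal is correct and follows the paper's proof: take a shortest $u,w$-path, use $P_4$-freeness to force it to be an induced $P_3$, and observe that its middle vertex must lie in $S$. The paper leaves these steps to the reader, and your Steps 1--3 spell them out accurately.
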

\begin{restatable}{observation}{ObsCographSeparatorDominate}
	\label{obs:cograph_separator_dominate}
	Every separator of a connected cograph $G$ is a dominating set of $G$.
\end{restatable}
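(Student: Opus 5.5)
The plan is to derive Observation~\ref{obs:cograph_separator_dominate} directly from Observation~\ref{obs:cograph_separator_pthree}. Let $G$ be a connected cograph and $S$ a separator, so $G-S$ has at least two connected components. We must show that every vertex $x \in V(G)$ lies in $N[s]$ for some $s \in S$. Vertices of $S$ are trivially dominated by themselves, so the only case to handle is $x \in V(G)-S$.

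Take such an $x$, and let $K$ be the component of $G-S$ containing it. Because $G-S$ is disconnected, there is another component $K' \neq K$, and we choose any vertex $y \in K'$. Now $x$ and $y$ lie in different components of $G-S$, so Observation~\ref{obs:cograph_separator_pthree} provides a common neighbor $s \in S$ of $x$ and $y$. In particular $x \in N(s) \subseteq N[s]$, so $x$ is dominated by $S$. Since $x$ was arbitrary, $\bigcup_{s \in S} N[s] = V(G)$, which is the claim.

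I expect no real obstacle here; the only point to check is that a second component exists, and this is exactly what the definition of a separator guarantees. If the dependence on Observation~\ref{obs:cograph_separator_pthree} were not allowed, I would first prove that observation itself. For $x,y$ in different components of $G-S$, take a shortest $x,y$-path in the connected graph $G$. Its interior must meet $S$, and since it is an induced path in a $P_4$-free graph it has at most three vertices. Hence it is $x,s,y$ with $s \in S$, because $x$ and $y$ are non-adjacent.
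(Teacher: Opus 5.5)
Your proof is correct and follows the paper's own argument: for a vertex outside $S$ you pick a vertex in a different component of $G-S$ and apply Observation~\ref{obs:cograph_separator_pthree} to obtain a common neighbor in $S$. You also handle the vertices of $S$ and the existence of a second component explicitly, which the paper leaves implicit. Your fallback shortest-path proof of Observation~\ref{obs:cograph_separator_pthree} matches the paper's proof of that observation.
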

\begin{restatable}{observation}{ObsCographSeparatorCover}
	\label{obs:cograph_separator_cover}
	If $G$ is a connected cograph with disjoint separators $S$ and $T$ then $S + T = V(G)$.
\end{restatable}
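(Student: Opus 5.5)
The plan is to derive this from the two preceding observations, taking Observation~\ref{obs:cograph_separator_pthree} as the basic tool. Let $S$ and $T$ be disjoint separators of the connected cograph $G$. Suppose, for a contradiction, that some vertex $x \in V(G)$ lies in neither $S$ nor $T$. Then $x$ belongs to some component $X$ of $G-S$ and also to some component $Y$ of $G-T$. The goal is to use the components of $G-T$ other than $Y$, together with the common-neighbour property, to force $x$ into $S$.

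The key step is as follows. Since $T$ is a separator, $G-T$ has a component $Y' \neq Y$. Every vertex $y' \in Y'$ and the vertex $x$ lie in different components of $G-T$. By Observation~\ref{obs:cograph_separator_pthree} they have a common neighbour in $T$, but we want information about $S$. So we symmetrically pick a vertex $z$ in a component $Z \neq X$ of $G-S$. By Observation~\ref{obs:cograph_separator_pthree}, $x$ and $z$ have a common neighbour $s \in S$ and are non-adjacent. Now distinguish where $z$ lies with respect to $T$.

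If $z \notin T$, then $z$ lies in some component of $G-T$. If that component differs from $Y$, then $x$ and $z$ also have a common neighbour $t \in T$. Consider the vertices $x, s, z, t$, and in addition vertices from $Y'$, and aim for an induced $P_4$. The cleaner route is to choose the components so that a path $x \ldots z$ avoiding $T$ exists. That path passes through $S$, since $x$ and $z$ are separated by $S$. A shortest such path gives an induced path on at least three vertices, $x, s, z$, with $s \in S \setminus T$. Then take $y' \in Y'$; $y'$ is separated by $T$ from $x$, $s$ and $z$. Using cograph structure, namely that in a cograph every connected induced subgraph on $\ge 2$ vertices has a disconnected complement, one shows that $y'$ must be adjacent to neither $x$ nor $z$. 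Also $y'$ is non-adjacent to $s$, since $s \notin T$ and lies in $Y$. Yet $y'$ has a common neighbour $t \in T$ with $x$. If $t$ is adjacent to $z$ and $s$, the vertices $y', t, x$ together with $s$ or $z$ produce an induced $P_4$, for instance $y' - t - z - s$ or $y' - t - x - \ldots$, after suitable choices. Hence a contradiction follows.

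I expect this case analysis to be the main obstacle: choosing $z$ and $y'$ so that the four vertices really induce a $P_4$, including the case $z \in T$. A simpler fallback is to use the standard cotree fact. A connected cograph is a join $G = G_1 \oplus \dots \oplus G_k$ of co-components, and every separator must contain all but one co-component entirely, plus a separator of the remaining one. Two disjoint separators then must omit different co-components $G_i \neq G_j$, since they cannot both contain all of them. Then $S \supseteq V(G) - V(G_i)$ and $T \supseteq V(G) - V(G_j)$, so $S + T \supseteq V(G)$. I would actually present this join-decomposition argument and prove the "all but one co-component" claim by noting that two vertices in different co-components are adjacent, so any two surviving vertices from different co-components would keep $G-S$ connected.
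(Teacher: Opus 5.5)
Your first sketch is not a proof as written: the final induced $P_4$ is left to ``suitable choices'', and the case $z\in T$ is never handled. You are right to drop it. The join-decomposition argument you settle on is correct, and it differs genuinely from the paper's route.

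The paper works directly from $P_4$-freeness. Assuming some $x\notin S+T$, it picks $v$ separated from $x$ by $S$ and $z$ separated from $x$ by $T$. It takes common neighbours $w\in S$ of $v,x$ and $y\in T$ of $x,z$ via Observation~\ref{obs:cograph_separator_pthree}, and then exhibits an induced $P_4$ among $v,w,x,y,z$ according to whether $wy\in E(G)$.

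You instead prove the stronger structural fact that $V(G)-S$ lies inside a single co-component $G_i$. Indeed, if survivors $a\in V(G_i)$ and $b\in V(G_j)$ with $i\ne j$ existed, every survivor would be adjacent to $a$ or to $b$, and $ab\in E(G)$, so $G-S$ would be connected. Two points need tightening in the write-up:
\begin{itemize}
\item State that a separator forces $|V(G)|\ge 3$, so $\overline{G}$ is disconnected and $k\ge 2$.
\item Replace ``they cannot both contain all of them'' by the actual reason: $i=j$ would put the non-empty set $V(G)-V(G_i)$ into $S\cap T$.
\end{itemize}
After that, $S+T\supseteq (V(G)-V(G_i))+(V(G)-V(G_j))=V(G)$ is immediate.

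Your route costs an appeal to the theorem that connected cographs on at least two vertices have disconnected complements. The paper itself uses this fact for Theorem~\ref{cor:painting_cographs}, whereas its proof of this observation needs only Observation~\ref{obs:cograph_separator_pthree}. In return, your argument gains three things:
\begin{itemize}
\item It is shorter.
\item It avoids tracking which of the five vertices may lie in $S$ or $T$. This is a delicate point: the paper justifies $vz\notin E(G)$ via the path $x,w,v,z$, which presumes $v\notin T$. If $v\in T$ and $vz\in E(G)$, one needs the induced $P_4$ $x,w,v,z$ instead.
\item It yields Observations~\ref{obs:cograph_separator_pthree} and~\ref{obs:cograph_separator_dominate} as by-products, because every vertex of the non-empty set $V(G)-V(G_i)\subseteq S$ is adjacent to all of $V(G)-S$.
\end{itemize}
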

\begin{restatable}{observation}{ObsCogemfreeDominatingPFour}
	\label{obs:cogem-free_dominatingPfour}
	Every induced $P_4$ in a {\cogem}-free graph is a dominating set.
\end{restatable}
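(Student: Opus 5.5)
The plan is a direct contradiction argument. Let $G$ be {\cogem}-free and let $P$ be an induced $P_4$ on vertices $v_1,v_2,v_3,v_4$ with edges $v_1v_2$, $v_2v_3$, $v_3v_4$. Suppose $V(P)$ does not dominate $G$. Then there is a vertex $x \in V(G)$ with $x \notin \bigcup_{i} N[v_i]$.

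First, $x$ is distinct from all $v_i$, since each $v_i \in N[v_i]$. Second, $x$ is adjacent to none of the $v_i$, because $x \notin N(v_i)$ for every $i$. Hence the vertex set $\{v_1,v_2,v_3,v_4,x\}$ induces exactly the three path edges together with an isolated vertex $x$. Indeed, the edges among the $v_i$ are exactly those of $P$, since $P$ is induced. There are no edges between $x$ and $P$.

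The map $v_i \mapsto v_i$, $x \mapsto x$ is therefore an isomorphism onto the {\cogem} of Figure~\ref{fig:example}~(c). So $G$ contains an induced {\cogem}, contradicting {\cogem}-freeness. Thus every induced $P_4$ is a dominating set.

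No step is a real obstacle. The only point needing care is to use the closed neighborhoods in the definition of domination, so that $x$ is both distinct from and non-adjacent to every path vertex. If the graph has fewer than five vertices, the statement still holds, because the argument produces the missing fifth vertex from the assumed non-domination.
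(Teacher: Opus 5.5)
Your proof is correct and takes the same approach as the paper: a vertex outside the closed neighborhoods of the four path vertices is distinct from and non-adjacent to all of them. Together with the induced $P_4$ it therefore induces a co-gem, contradicting co-gem-freeness.
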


\subsection*{Graph Painting}

For obvious reasons, we call any (partial) function $p\colon V \rightarrow C$ with suitable color set $C$ a \emph{painting} of a graph $G=(V,E)$ instead of using the more common name \emph{coloring}.
To refer to a specific painting $p$ of a graph $G$ in short, we often simply say that $G$ is painted with $p$.
Notice that paintings may leave the color of vertices of $G$ undefined.
Accordingly, a painting $p$ is called \emph{total}, if the function $p$ is total (thus, defines a color for every vertex of $G$).
For every color $c \in C$, we use $p^{-1}(c) = \{v \in V(G) \mid p(v) = c\}$ to address all $c$-painted vertices of $G$.
Notice that the set $p^{-1}(c)$ is usually not connected.

For a painting $p$ of $G$, a subset $A \subseteq V$ is \emph{monochromatic}, if $p(v) = p(w)$ for all $v,w \in A$.
We point out that this notion includes undefined vertices: $A$ is also called monochromatic, if $p(v)$ is undefined for all $v \in A$.
In a painting $p$, a monochromatic set $A$ is called \emph{color component} of $G$, if $A$ is connected in $G$ and no superset of $A$ is monochromatic and connected.

\subsection*{Problem Statement}

Before we start with our contribution, we like to formally state the {\miniaturePainting} problem as set up in the introduction.
Let $G$ be a given graph and let $t\colon V(G) \rightarrow C$ be a template for $G$, that is, a total painting of $G$.
A \emph{brush stroke} (or just \emph{stroke}, for short) is a pair $(A, c)$ of a \emph{\area}, a vertex set $A\subseteq V(G)$ that is connected in $G$, and a color $c \in C$.
If $p\colon V(G) \rightarrow C$ is a painting of $G$ then the result of applying a stroke $(A, c)$ is a new painting $p'$ that, for all vertices $v \in A$ that are in the {\area} of the stroke, defines the new color $p'(v) = c$, and, for all unaffected vertices $v \in V(G)-A$, leaves the color $p'(v) = p(v)$ unchanged.

Based on this definition, a \emph{{\plan}} $P$ that paints $t$ onto $G$ in $s$ strokes is a non-empty sequence $P = (A_1, c_1), \dots, (A_s,c_s)$ of strokes that progressively creates a sequence of paintings $p_0, p_1, \dots, p_s$ such that (i) $p_0$ is the function that is undefined for all vertices, (ii) $p_i$ is the result of applying stroke $(A_i, c_i)$ on painting $p_{i-1}$, for all $i \in [1,s]$, and (iii) the painting resut $p_s$ is exactly the template $t$.

A {\plan} for $G$ is \emph{optimal}, if it uses the least amount of strokes $s$.
We also say that $G$ can be painted with $t$ in $s$ strokes, if an optimal {\plan} for $G$ has $s$ strokes.
The \textsc{Minimum} {\miniaturePainting} problem is the computational task to find an optimal {\plan} for a given graph $G$ and a template $t$.
A decision version of this is $k$-{\miniaturePainting} that just asks, for given $(G,t)$, if $G$ can be painted with $t$ in $k$ strokes or less.
In the remainder of the paper, it is always obvious from the context when we consider the optimization or decision version of the problem.
Thus, it is convenient for us to just use the name {\miniaturePainting}, in every situation.

\medskip
To better organize our subsequent work with {\plans} $P = (A_1,c_1), \dots, (A_s,c_s)$, we also introduce the following notions.
For all $v \in V(G)$, we use the \emph{finalizing} $f_P(v) = i$ to refer to the maximum index $i$ of a stroke that overdraws $v$, that is, with $v \in A_i$ and $v \not\in A_j$ for all $j > i$.
Obviously, this implies that $t(v) = c_i$.
Moreover, we use $F_P(i)$ for the set of vertices that are finalized by stroke $i$, that is $F_P(i) = \{v \in V(G) \mid f_P(v) = i\}$.

%

\section{Wait, how is that the {\freeFloodIt} Game?}
\label{sec:equivalence}
	
At first glance, {\freeFloodIt}, as first described by Clifford~et~al.~\cite{Clifford2010,Clifford2012}, seems to have little in common with {\miniaturePainting}.
Ultimately, the rules of this single player game demand essentially the opposite of our setting:
Given a graph $G = (V, E)$ that is already painted via $p_0\colon V \rightarrow C$, the player has to define a (possibly empty) \emph{flooding sequence} $(v_1,c_1), \dots, (v_s,c_s)$ of moves $(v_i, c_i)$ that specify a \emph{pivot vertex} $v_i \in V$ and a \emph{flooding color} $c_i$.
The result of a flood move $(v_i, c_i)$ is a new painting $p_i\colon V \rightarrow C$ of $G$ that alters $p_{i-1}$ as
\[p_i(v) = \begin{cases}
	c_i, & \text{if $v_i$ and $v$ are in the same color component with respect to } p_{i-1},\\
	p_{i-1}(v), &\text{otherwise.}
\end{cases}\]
The goal of the game is to make $p_s$ monochromatic, that is, a constant function, while minimizing the sequence length $s$.

It is precisely the \emph{oppositeness} of the goals -- painting versus decoloring a given graph -- that constitutes the commonality between the two optimization problems.
This section shows that reversing any flooding sequence for a given graph $G$ painted with $p_0$ also provides a suitable painting strategy for $G$ when $p_0$ is given as the template.
Whenever a move $(v_i, c_i)$ creates a bigger {\area} $A$ in $G$ with color $c_i$ by $c_i$-flooding a smaller $c$-colored {\area} $B$ that contains $v_i$ then, for {\miniaturePainting}, we can reverse this move with the paint stroke $(B, c)$.
Something similar applies in the other direction, too, but it is a little more complicated.
Here, we will have to answer how a paint stroke can be reversed into a single flooding move, if the stroke covers multiple differently colored vertices.
To address this question, the following begins with a normalization for painting strategies that circumvents the possibility of falling into said trap.

More precisely, we introduce the following normalized version of {\miniaturePainting}:
In a painting $p$ of a graph $G$, a stroke $\sigma = (A, c)$ is called \emph{\recursive} when (i) the stroke $\sigma$ paints a monochromatic {\area} $A$ in $p$, that is, $p(v) = p(w)$ for all $v,w \in A$, and (ii) in the painting $p'$ derived from $p$ by applying stroke $\sigma$, the {\area} $A$ is a color component of $G$.
A {\plan} $(A_1,c_1), \dots, (A_s,c_s)$ for a graph $G$ that, step by step, creates the paintings $p_0, p_1, \dots, p_s$ is called \emph{\recursive}, if it satisfies the following two properties:
Firstly, all strokes $(A_i, c_i), i \in [1,s]$ have to be \emph{\recursive}, that is, the {\area} $A_i$ is monochromatic in $p_{i-1}$ and a color component of the graph $G$ painted with $p_i$.
Secondly, the first stroke has to fill the entire graph, that is, $A_1 = V(G)$.
Observe that the {\plan} for the example in Figure~\ref{fig:example}~(a) that the introduction presents is a {\recursive} one.

\begin{restatable}{lemma}{LmaRecursivePainting}
	\label{lma:recursive_painting}
	A template $t$ can be painted on a graph $G$ in $s$ strokes, if and only if $G$ has a {\recursive} $s$-stroke {\plan} for $t$.
\end{restatable}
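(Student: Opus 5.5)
The backward direction needs no work: a {\recursive} $s$-stroke {\plan} is in particular an $s$-stroke {\plan}. For the forward direction I would argue about optimal plans only. Since "can be painted in $s$ strokes" refers to the optimum, it suffices to show that any {\plan} with $s$ strokes can be turned into a {\recursive} one with at most $s$ strokes. A plan with fewer strokes than the optimum cannot exist, so the count is then exactly $s$. I would assume $G$ connected, since $A_1=V(G)$ only makes sense then; the disconnected case would be handled per component, and I have not checked that it is consistent with the normalization $A_1 = V(G)$.

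\emph{Step 1 (normalize the first stroke).} Every vertex lies in some $A_i$, because $t$ is total. So replacing $A_1$ by $V(G)$ does not change $p_s$: each vertex whose only cover used to come later is still overwritten later. All intermediate paintings $p_1,\dots,p_s$ are then total. Making $A_1=V(G)$ also makes the first stroke {\recursive}.

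\emph{Step 2 (induction on $s$).} After Step 1, let $U=A_2\cup\dots\cup A_s$ and let $U_1,\dots,U_m$ be the connected components of $G[U]$. Each later area $A_i$ is connected, so it lies inside a single $U_j$. Hence the strokes split into $m$ sub-plans with $s_1+\dots+s_m=s-1$ strokes in total. Vertices outside $U$ keep the color $c_1$. Restricted to $G[U_j]$, the $j$-th sub-plan preceded by a fill stroke $(U_j,c_1)$ paints $t|_{U_j}$ in $s_j+1$ strokes. In the modified setting, where $G[U_j]$ starts out completely painted with $c_1$, it paints $t|_{U_j}$ in $s_j$ strokes. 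I would prove the stronger inductive statement in exactly this "pre-painted start" form: starting from a monochromatic painting with color $c$, a template reachable in $k$ strokes is reachable by $k$ strokes each of which is {\recursive}. Applying induction inside each $U_j$ and concatenating then gives a plan for $G$.

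\emph{Step 3 (lifting {\recursive}ness from $G[U_j]$ to $G$), which I expect to be the main obstacle.} Monochromaticity of an area in the previous painting transfers to $G$ immediately. Being a color component in $G$ can fail, however. A stroke inside $U_j$ whose color is $c_1$ may touch a neighbour outside $U$, which still has color $c_1$, and the true color component in $G$ is then larger. I would first try an exchange argument. If such a stroke exists, enlarge its area to the whole $c_1$-color component it meets. This keeps the painting result correct, but may destroy monochromaticity of the area in the previous painting. In that case I would use the freedom to reorder or merge strokes: the offending outside region is untouched by strokes $2,\dots,s$. Then either some stroke is redundant, which contradicts optimality, or the bad stroke can be postponed so that at its time the whole component is uniformly colored. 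If this local repair becomes unwieldy, the fallback is a minimal-counterexample argument. Among all optimal plans, choose one maximizing the length of its {\recursive} prefix. At the first non-{\recursive} stroke $(A_i,c_i)$, replace $A_i$ by the set $F'$ of vertices of $A_i$ that are still visible later, grown to the color component in $p_i$. Then show that the number of strokes does not increase while the prefix grows, a contradiction.
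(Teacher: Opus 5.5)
\textbf{The gap: Step 3 is left open, and neither sketch closes it.} Step 1 is correct, and the decomposition in Step 2 is a legitimate but inessential reduction. The real content of the lemma is Step 3, which you leave open.

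Take the graph on $x,r,p,q,y$ with edges $xr,rp,rq,py,qy$. Let $t(x)=t(r)=t(y)=1$ and $t(p)=t(q)=2$, and take the plan $(V,1),(\{r,p,q\},2),(\{r\},1)$.
\begin{itemize}
\item The plan is optimal. A two-stroke plan would make one color class equal to its second area, hence connected, but both classes here are disconnected.
\item In Step 2 you get $U=\{r,p,q\}$. The last stroke is exactly your bad case: $\{r\}$ is not a color component of $p_3$ because of $x$.
\item Your exchange argument does not apply. Enlarging the area to $\{x,r\}$ destroys monochromaticity in $p_2$. No stroke is redundant, and the bad stroke is already the last one, so it cannot be postponed.
\item Your fallback replacement produces the same set $\{x,r\}$, so the recursive prefix does not grow.
\end{itemize}
In fact the third stroke can only be $(\{r\},1)$ or $(\{x,r\},1)$, so no change to the offending stroke alone can help. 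The repair must enlarge the \emph{earlier}, already recursive stroke $2$ by the vertex $x\notin U$. So repairing front to back, with the recursive prefix as potential, is the wrong mechanism. The repair also does not respect the component structure of $U$.

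\textbf{The missing idea (the paper's proof): repair from the back.} Among all $s$-stroke plans for $t$, take one that minimizes the index $j$ of the last non-recursive stroke. Let $A$ be the $c_j$-colored component of $p_j$ containing $A_j$. Replace every $A_i$ with $i\le j$ and $A_i\cap A\neq\emptyset$ by $A_i+A$.
\begin{itemize}
\item The new areas are connected.
\item Every vertex outside $A$ is covered by the same strokes among $1,\dots,j$ as before. So $p_j$ is unchanged, and all strokes after $j$ stay recursive.
\item Every stroke before $j$ now contains $A$ or misses it. So $A$ is monochromatic in the new $p_{j-1}$, and stroke $j$ becomes recursive.
\end{itemize}
Earlier strokes may lose the property, but the last bad index has dropped, which is a contradiction. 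In the example this gives $(V,1),(\{x,r,p,q\},2),(\{x,r\},1)$.

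This argument needs no optimality and no induction, and that matters. Under your optimum reading, the backward direction is not free; it is false. On one vertex, $(V,c'),(V,c)$ is recursive but not optimal. The statement must therefore be read as ``there is an $s$-stroke plan'', as the paper's proof and Theorem~\ref{thm:painting_flooding_equivalence} do. Under that reading, arguments like ``some stroke is redundant, contradicting optimality'' are unavailable.

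Finally, keep your Step 1 at the front. Since $A_1=V(G)$ meets $A$, the modification preserves $A_1=V(G)$. The paper instead sets $A_1=V(G)$ at the very end. That can break condition (ii) for a later stroke of color $c_1$ whose area bordered still unpainted vertices. For example, on the path $a,b,c,d$ with plan $(\{a\},1),(\{c\},2),(\{c\},1),(\{b\},3),(\{d\},4)$, the third stroke stops being recursive once $A_1$ becomes $V(G)$.
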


With the above normalization tool in hand, we are ready to provide the quasi-equivalence of {\miniaturePainting} and {\freeFloodIt}.
\begin{restatable}{theorem}{ThmPaintingFloodingEquivalence}
	\label{thm:painting_flooding_equivalence}
	Let $G = (V,E)$ be a graph and $t\colon V \rightarrow C$ be a template for $G$.
	There is an $s$-{\plan} that paints $t$ onto $G$ if and only if there is a flooding sequence of $s-1$ moves for $G$ when initially painted with $t$.
\end{restatable}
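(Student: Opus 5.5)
We prove the two directions separately, using Lemma~\ref{lma:recursive_painting} to reduce the painting side to {\recursive} {\plans}. The guiding idea is that the sequence of paintings $p_0, p_1, \dots, p_s$ of a {\recursive} {\plan}, read backwards from $p_s = t$ to $p_1$, is a flooding sequence. Each backward step undoes one stroke. Since $A_1 = V(G)$, the painting $p_1$ is constant.

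\emph{From painting to flooding.} By Lemma~\ref{lma:recursive_painting} we may assume the $s$-stroke {\plan} $(A_1,c_1),\dots,(A_s,c_s)$ is {\recursive}. For $i = s, s-1, \dots, 2$, define a flood move $(v_i, d_i)$, where $v_i$ is any vertex of $A_i$ and $d_i$ is the common color of $A_i$ in $p_{i-1}$. This color exists because $A_i$ is monochromatic in $p_{i-1}$. For $i \ge 2$ it is defined, since $p_1$ is already total. We show that applying $(v_i,d_i)$ to $p_i$ yields exactly $p_{i-1}$. Since the stroke is {\recursive}, $A_i$ is a color component of $G$ under $p_i$, and $v_i \in A_i$, so the move recolors exactly $A_i$ with $d_i$. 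Outside $A_i$, the paintings $p_i$ and $p_{i-1}$ agree, and on $A_i$ the painting $p_{i-1}$ equals $d_i$. Hence the $s-1$ moves take $t = p_s$ to $p_1$, which is constant because $A_1 = V(G)$. This gives a flooding sequence of $s-1$ moves.

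\emph{From flooding to painting.} Let $(v_1,c_1),\dots,(v_{s-1},c_{s-1})$ flood $G$, starting from $q_0 = t$, through paintings $q_0,\dots,q_{s-1}$, with $q_{s-1}$ constant of color $c$. Let $B_j$ be the color component of $v_j$ under $q_{j-1}$. This set is connected, and it is monochromatic of some color $b_j$. The {\plan} consists of the first stroke $(V(G), c)$ followed by the strokes $(B_j, b_j)$ for $j = s-1, s-2, \dots, 1$, which is $s$ strokes in total. By backward induction, the painting after the strokes up to index $j$ equals $q_{j-1}$. The move changes only $B_j$, setting it to $c_j$, so $q_j$ agrees with $q_{j-1}$ off $B_j$. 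Consequently, repainting $B_j$ with $b_j$ in $q_j$ restores $q_{j-1}$. The final painting is $q_0 = t$.

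\emph{Main obstacle.} The delicate point is the first direction. There we need the area of each stroke to be exactly a color component after the stroke, and monochromatic before it; otherwise a single flood move could spread beyond $A_i$ or fail to reproduce the old colors. Both properties are guaranteed by {\recursive}ness, which is why Lemma~\ref{lma:recursive_painting} carries the weight here. A second check is the degenerate case $s = 1$, where the template is constant and the flooding sequence is empty; the constructions above handle it trivially.
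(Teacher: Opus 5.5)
Your proof is correct and is essentially the paper's argument. Both rest on Lemma~\ref{lma:recursive_painting} and on reversing a recursive stroke $(A_i,c_i)$ into a flood move $(v,d)$ with $v\in A_i$ and $d$ the color of $A_i$ in $p_{i-1}$, and conversely a flood move on a component $B$ into the stroke $(B,b)$ with $b$ the old color of $B$. The paper packages this as an induction on $s$ that peels off one stroke or move at a time, whereas you unroll it into a direct backward construction.

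Like the paper, you use $(V(G),c)$ as a stroke, which tacitly requires $G$ to be connected. This is an implicit hypothesis of the theorem itself rather than a flaw in your argument.
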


As a reading of Section~\ref{sec:proofs} reveals, the sequence from the introduction that floods the example in Figure~\ref{fig:example}~(a) is, in fact, obtained by applying the \emph{reversion} approach from the proof of Theorem~\ref{thm:painting_flooding_equivalence} to the previously given {\plan}.
And like the introduction suggests, it is true that the previous result ties {\miniaturePainting} and {\freeFloodIt} tightly together.
It actually allows transferring many known results from the latter problem to the former.
Hardness of {\freeFloodIt} translates directly to {\miniaturePainting} and, so, we can formally derive the following (partial) list of conclusions:
\begin{corollary}[see~\cite{Clifford2012,FELLOWS2015,Fukui2013}]\label{cor:painting_hardness}
	\emph{\miniaturePainting} is $\NP$-complete, even when input graphs are restricted to
	\begin{itemize}
		\setlength{\itemsep}{0pt}
		\setlength{\parskip}{1pt}
		\item grid graphs, even with templates of just three colors,
		\item trees, even with templates of just three colors,
		\item AT-free graphs,
		\item co-comparability graphs, and
		\item split graphs.
	\end{itemize}
\end{corollary}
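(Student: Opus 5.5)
The corollary follows from Theorem~\ref{thm:painting_flooding_equivalence} together with the cited hardness results for {\freeFloodIt}. The plan has two parts: membership in $\NP$, then a polynomial reduction from the decision version of {\freeFloodIt} on each listed class.

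\emph{Membership in $\NP$.} By Lemma~\ref{lma:recursive_painting} we may restrict to {\recursive} {\plans}. We then need a polynomial bound on the number of strokes. Each color component of $t$ can be produced by one stroke after an initial full fill, so a trivial plan uses at most $|V(G)|+1$ strokes. Hence any yes-instance with $k > |V(G)|+1$ is trivially accepted, and otherwise a certificate is a sequence of at most $|V(G)|+1$ pairs $(A_i,c_i)$. Checking the certificate means verifying connectivity of each $A_i$, simulating the paintings $p_0,\dots,p_s$, and comparing $p_s$ with $t$. All of this takes polynomial time.

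\emph{Hardness.} The starting point is a {\freeFloodIt} instance $(G,p_0,k)$, where $G$ belongs to one of the listed classes. For grids with three colors this is~\cite{Clifford2012}, for trees with three colors~\cite{FELLOWS2015}, and for AT-free, co-comparability and split graphs~\cite{Fukui2013}. We map it to the {\miniaturePainting} instance $(G,t,k+1)$ with $t=p_0$. This map leaves the graph untouched and uses exactly the colors of $p_0$, so it preserves both the graph class and the three-color restriction, and it is clearly polynomial.

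Correctness comes from Theorem~\ref{thm:painting_flooding_equivalence}: $G$ can be painted with $t$ in $s$ strokes exactly when $G$ painted with $t$ can be flooded in $s-1$ moves. The only point needing care is that the decision versions ask for ``at most'' a bound. Both optima are therefore governed by the exact-length statement applied to the optimal $s$, which gives $\mathrm{opt}_{\text{paint}} = \mathrm{opt}_{\text{flood}}+1$.

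One edge case also needs checking: a monochromatic $t$ corresponds to a one-stroke plan and an empty flooding sequence, which the theorem allows since flooding sequences may be empty. I do not expect a genuine obstacle. The main points requiring care are this off-by-one translation of the thresholds and the polynomial bound on plan length for $\NP$ membership.
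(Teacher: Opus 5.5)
Your proposal is correct and takes the same route as the paper. The paper obtains the corollary by transferring the cited {\freeFloodIt} hardness results through Theorem~\ref{thm:painting_flooding_equivalence}, using the same graph, $t=p_0$, and the threshold shifted by one; your explicit $\NP$-membership argument and threshold bookkeeping fill in details the paper leaves implicit. One small fix: you do not need the initial full fill, because painting each color component of $t$ once is already a valid plan with at most $|V(G)|$ strokes, and $(V(G),c)$ would not be a legal stroke if $G$ were disconnected.
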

As indicated in the introduction, the list of tractable cases is much shorter and we will address it in the next section.

%

\section{Co-gem-free Graphs allow Canonical Paint Plans}
\label{sec:canonical_paint_plans}

To the best of our knowledge, {\freeFloodIt} is known to be solvable in polynomial time on cographs, yet an explicit proof appears to be missing from the literature. 
This is likely because a dynamic programming approach using the cotree of a cograph is straightforward to envision. 
However, there exists an even simpler strategy, which we later generalize to {\cogem}-free graphs. 
It is therefore worthwhile to present this approach in detail.

The first key observation is that a disconnected cographs $G$ can be flooded by processing their connected components $G_1, \dots, G_k$ sequentially.
Each of these components is again a cograph, and moreover connected.
For connected cographs, a second crucial insight comes into play: they contain many \emph{dominating edges}, that is, edges $vw$ such that $V(G) = N[v] + N[w]$.
An optimal flooding sequence can then be obtained by exhaustively considering such dominating edges $vw$ and choosing one that, after flooding $vw$ monochromatically, yields the shortest possible flooding sequence for its neighborhood -- which in this case coincides with the entire graph.

We put this basic idea into a formal statement but with respect to mirror problem of {\miniaturePainting} on cographs.
\begin{restatable}{theorem}{ThmPaintingCographs}
	\label{cor:painting_cographs}
	There is a polynomial time algorithm for \emph{\miniaturePainting} on cographs.
\end{restatable}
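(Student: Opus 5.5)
The plan is to reduce to connected cographs, normalize an optimal plan via Lemma~\ref{lma:recursive_painting}, and then run a dynamic program over the cotree of $G$ with polynomially many states.

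\emph{Disconnected case.} If $G$ is disconnected with components $G_1,\dots,G_k$, every painting area is connected and so lies inside one component. Hence strokes in different components never interact. A plan for $G$ is therefore an interleaving of plans for the $G_i$, and the optimum is $\sum_i \mathrm{opt}(G_i, t|_{V(G_i)})$. Since every induced subgraph of a cograph is again a cograph, it remains to handle connected cographs.

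\emph{Connected case: canonical plans.} Let $G$ be connected with $|V(G)|\ge 2$. Then the complement of $G$ is disconnected, so $V(G)=X_1+\dots+X_k$ with $k\ge 2$ and every vertex of $X_i$ adjacent to every vertex of $X_j$ for $i\neq j$. By Lemma~\ref{lma:recursive_painting} I may assume a recursive optimal plan, so $A_1=V(G)$ with some colour $c_1$, and every later area is monochromatic before its stroke and a colour component after it.

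The structural claim I would prove is about the later strokes. Each later area $A_i$ either lies inside a single component of $G[t^{-1}(\neq c)]$ for the current background colour $c$, or it is a ``trunk'' stroke. Trunk strokes have nested areas that all contain a fixed dominating edge $vw$ (or a dominating vertex). For cographs such edges exist in abundance, and Observations~\ref{obs:cograph_separator_pthree}--\ref{obs:cograph_separator_cover} are the tools to show that areas meeting several co-components can be pushed onto the trunk. The exchange argument would go as follows. If a non-trunk area $A_i$ meets two different co-components, then $A_i$ is itself dominating. Its colour component after the stroke must then swallow the trunk, so it can be merged into the trunk without increasing the number of strokes.

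Given the claim, the cost of a connected cograph becomes
\[
\mathrm{opt}(G)=\min_{vw,\;(c_1,\dots,c_r)}\Bigl(r+\sum_{H}\mathrm{opt}(H)\Bigr),
\]
where $(c_1,\dots,c_r)$ is the sequence of trunk colours and $H$ ranges over the residual pieces. Each residual piece is a connected component of an induced subgraph of the form $G[X]-t^{-1}(c)$, taken relative to the trunk colour $c$ that is active when the piece is painted. The inner optimization over trunk colours should simplify to counting the distinct colours the trunk must pass through, in the same way as the flooding argument sketched before the theorem: flood $vw$ monochromatic, then absorb neighbours colour by colour.

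\emph{Polynomiality and main obstacle.} To bound the work, I would index states by pairs $(X,c)$, where $X$ is a node of the cotree and $c\in C\cup\{\bot\}$ is the background colour already on $G[X]$. That gives $O(n\,|C|)$ states. Union nodes combine their children by summation. Join nodes use the trunk recurrence above, minimizing over $O(n^2)$ candidate dominating edges and $|C|$ colours. The step I expect to be the real obstacle is the exchange argument at join nodes. One must show that an optimal recursive plan never profits from a non-trunk stroke straddling co-components, and that the residual pieces really are independent subproblems of the stated form. I would attack it first by induction on the number of strokes, applied to the last stroke violating canonicity. Then I would use Observation~\ref{obs:cograph_separator_dominate}: any area whose removal disconnects the current colour component is dominating, and hence can be re-routed through $vw$.
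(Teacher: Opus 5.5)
\textbf{There is a genuine gap: the proposal does not establish either of the two things it depends on.}

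\emph{The trunk claim is not proved, and its sketched step fails.} You flag the trunk claim and its exchange argument as the main obstacle and only sketch them. The sketched step does not work as stated. In a recursive plan, the colour component created by stroke $i$ is exactly $A_i$. So an area $A_i$ meeting two co-components is indeed dominating, but nothing forces it to contain $v$ or $w$ or to ``swallow the trunk''. Its neighbours on the trunk simply carry other colours, so you have no way to merge it into the trunk.

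\emph{The dynamic program is not well defined.} Your residual pieces are components of $G[X]-t^{-1}(c)$, and these are in general not cotree nodes. Recursing into them produces sets of the form $X-t^{-1}(c)-t^{-1}(c')-\cdots$. The state space of pairs $(X,c)$, with $O(n\,|C|)$ entries, therefore does not cover the subproblems your recurrence generates. Likewise, the minimum over trunk colour sequences $(c_1,\dots,c_r)$ ranges over exponentially many sequences unless the promised simplification is proved. So neither correctness nor polynomiality follows. Your reduction to connected components is fine.

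\textbf{The missing idea is that no recursion is needed.} You gesture at it yourself: flood a dominating edge and absorb colour by colour. You do not notice that this already comes within one stroke of the trivial lower bound $|C|$.

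\emph{Upper bound $|C|+1$.} A connected cograph with at least two vertices is a join $G=G_1\vee G_2$. Any edge $v_1v_2$ with $v_1\in V(G_1)$ and $v_2\in V(G_2)$ is dominating, so $\{v_1,v_2\}$ is a connected hub. Paint $(t^{-1}(c)+\{v_1,v_2\},c)$ for every $c\neq t(v_1)$. Then paint $(t^{-1}(t(v_1))+v_2,\,t(v_1))$ and finally $(\{v_2\},t(v_2))$. This uses $|C|+1$ strokes.

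\emph{When $|C|$ strokes suffice.} Suppose some colour class $t^{-1}(c)$ is connected. Choose $v_1$ in it and run the hub strokes for all other colours, with $t(v_2)$ last. Finish with $(t^{-1}(c),c)$. This uses $|C|$ strokes, which is optimal.

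\emph{Lower bound.} Conversely, a $|C|$-stroke plan uses each colour exactly once. Its last stroke must therefore paint exactly $t^{-1}(c_s)$, so that colour class is connected.

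Hence the optimum of a connected cograph is $|C|$ if some colour class is connected, and $|C|+1$ otherwise. This is checkable in polynomial time, and together with the reduction to components it is the entire proof.
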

More important than the result itself is the structural property of cographs that enables this efficient approach.
While {\cogem}-free graphs no longer necessarily admit dominating edges -- as the path $P_5$ illustrates convincingly -- Observation~\ref{obs:cogem-free_dominatingPfour} provides an adequate structural substitute.
Instead of relying on dominating edges, we exploit induced $P_4$ that serve as dominating sets.

Before diving into the details of our approach, we briefly motivate how a dominating $P_4$ can be leveraged to paint a graph~$G$ that contains it.
In fact, any small vertex set $D \subseteq V(G)$ that is both, dominating and connected, can be applied in a promising two-phase {\plan}:
In the first phase, the set~$D$ is used as a hub to paint, for every color $c \in C$, the {\area} $A_c = t^{-1}(c) + D$, that is, the union of~$D$ and all vertices that are assigned color~$c$ by the template~$t$.
Since $D$ is dominating, we have $A_c \subseteq \bigcup_{v \in D} N[v]$, and the connectedness of~$D$ ensures that $A_c$ is indeed a {\area}.
After the first phase, all of $G$ but $D$ has already been finished.
The second phase then completes the painting of the vertices in~$D$ itself, which can be achieved using at most~$|D|$ additional strokes.

One advantage of this strategy is that the order in which colors are applied during the first phase is largely arbitrary.
Consequently, the main optimization task is reduced to determining an efficient way of painting the vertices of~$D$.
Another benefit is that the entire plan requires no more than $|C| + |D|$ strokes, thereby providing a useful upper bound for the search for even better strategies.

This idea is particularly appealing for {\cogem}-free graphs, as they tend to admit many small connected dominating sets.
Assuming that $G$ is connected, it is either a cograph, in which case it contains dominating edges, or it contains induced $P_4$, all of which are dominating.
Hence, any optimal painting strategy for~$G$ uses at most $|C| + 4$ strokes.
In fact, we believe that every connected {\cogem}-free graph admits an optimal painting strategy that strictly follows the two-phase scheme of $|C|$ plus less than five strokes described above.
In this paper, however, we establish this claim only in a slightly weakened form.

We begin by formalizing the generalized two-phase concept.
Let $G$ be a graph with a connected dominating set $D \subseteq V(G)$, let $k$ be a non-negative integer, and let $\prec$ be any total ordering of the color set~$C$.
An $s$-stroke {\plan} $P = (A_1,c_1), \dots, (A_s,c_s)$ is called \emph{$(D,k,\prec)$-canonical} if the following conditions hold:
\begin{enumerate}[label=(\roman*)]
	\item for every $i \in [1,s-k]$, the stroke $(A_i,c_i)$ paints the {\area} $A_i = t^{-1}(c_i) + D$,
	\item for all $i,j \in [1,s-k]$ with $i < j$, we have $c_i \prec c_j$.
\end{enumerate}
Thus, a $(D,k,\prec)$-canonical plan consists of a first phase of $s-k$ strokes that explicitly exploits the hub~$D$ to finish all vertices in $V(G) - D$ according to the prescribed color order, followed by a second phase of $k$ strokes that completes the painting of~$G$ without further use of~$D$ as a hub.

We emphasize that the inclusion of the ordering~$\prec$ is not a methodological restriction.
Rather, it will later provide the algorithmic freedom to fix an arbitrary color order in advance.
Also note that it is not required that $k = |D|$.
The first $s-k$ strokes of a $(D,k,\prec)$-canonical plan~$P$ are called the \emph{head} of~$P$, while the remaining $k$ strokes form its \emph{tail}.

\medskip
The following lemma captures the relationship between {\cogem}-free graphs and canonical {\plans}.
\begin{restatable}{lemma}{LemCanonicalOptplan}
	\label{lem:canonical_optplan}
	For every {\cogem}-free graph $G$ containing an induced $P_4$, every template $t\colon V(G) \rightarrow C$, and every total ordering $\prec$ of~$C$, there exists an optimal {\plan} that is $(D,k,\prec)$-canonical for some dominating set $D$ that induces a $P_4$ in~$G$ and some $k \leq 12$.
\end{restatable}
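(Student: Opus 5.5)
Starting from an arbitrary optimal plan, I would transform it into a canonical one without increasing its length. By Lemma~\ref{lma:recursive_painting} I may assume the optimal plan $P=(A_1,c_1),\dots,(A_s,c_s)$ is recursive, so $A_1=V(G)$ and every stroke paints a monochromatic area that becomes a color component. The idea is to find an induced $P_4$ $D$, which is dominating by Observation~\ref{obs:cogem-free_dominatingPfour}, such that all vertices outside a small set are finalized by strokes whose areas can be enlarged to $t^{-1}(c_i)+D$. The key quantity is the set of \emph{stroke indices} $i$ whose finalized set $F_P(i)$ is not contained in $D$. For every color $c$ used outside $D$ I want one head stroke $(t^{-1}(c)+D,c)$ in the prescribed order $\prec$, followed by a short tail repainting $D$ and perhaps a few vertices near it. The required count is then $s = (\text{number of colors on } V(G)-D) + k$, and I must show this is at most the length of $P$.

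First I would establish a lower bound on $s$: every color $c$ appearing in $t$ needs a stroke that finalizes some $c$-vertex, giving $s \ge |t(V)|$. More sharply, I would count the strokes of $P$ that finalize vertices of $D$ or that are ``extra'' strokes of a color already used. The central structural step is to show that in a recursive optimal plan, at most a constant number (say $12$) of strokes are not \emph{color-unique}, meaning strokes beyond the first finalizing stroke for each color. I would argue this using the domination: if many strokes of the same colors occurred, a late stroke painting a connected set through $D$ could replace several of them. To choose $D$, I would take the induced $P_4$ in $G$ closest to the finalizing areas of the last few strokes. For instance, I would pick $D$ inside $A_j$ for a late index $j$, or meeting the components of $F_P(i)$, so that every later stroke interacts with $D$ in a bounded way.

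Then comes the exchange argument. For each color $c$, I replace all strokes of $P$ finalizing $c$-vertices outside $D$ by the single stroke $(t^{-1}(c)+D,c)$, ordered by $\prec$. This is valid because the head strokes finalize all of $V(G)-D$ irrespective of order: each vertex outside $D$ is painted only by its own color's head stroke, since those strokes are disjoint outside $D$. The tail then only needs to produce $t$ on $D$. Here I would copy the restrictions to $D$ of the strokes of $P$ that finalize vertices of $D$, together with a bounded number of correction strokes, keeping areas connected by routing them through $D$ or its neighbors. The tail strokes may touch vertices outside $D$ only if they are repainted afterwards, so I would restrict each tail area to $D$ plus connecting vertices colored correctly, and choose them within $D$ when possible.

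The main obstacle is the accounting: proving that the canonical plan has at most $s$ strokes with tail length $k\le 12$. A stroke of $P$ may finalize vertices both inside and outside $D$, which risks double counting between head and tail. Moreover, $P$ might use fewer strokes than the number of colors outside $D$ plus $|D|$. I would handle this by charging each tail stroke to a distinct stroke of $P$ that finalizes a vertex of $D$, or that is a second stroke of an already-used color. A case analysis over how the at most four vertices of $D$ are finalized, each vertex costing at most three strokes, would yield the constant bound $12$. Making this charging injective, via the cograph Observations~\ref{obs:cograph_separator_pthree}--\ref{obs:cograph_separator_cover} applied to non-$P_4$ parts, is where I expect the real difficulty.
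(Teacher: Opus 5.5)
\textbf{There is a genuine gap.} The decisive step is missing, and the construction you sketch does not control the length. You discard $P$ and rebuild a plan: one hub stroke $(t^{-1}(c)+D,c)$ for every colour occurring on $V(G)-D$, followed by a tail that repairs $D$. To conclude, you need (number of colours on $V(G)-D$) $+$ (tail length) $\le s$, and nothing in the proposal proves this.

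The charging you suggest is not injective in general. Suppose $P$ uses a single stroke of colour $c$ to finish both $c$-vertices outside $D$ and a vertex $v\in D$. That stroke is consumed by your head, yet in your plan $v$ is overwritten by every $\prec$-later hub stroke. It therefore needs a fresh tail stroke, with nothing left in $P$ to pay for it. There is a second problem: the restrictions $A_i\cap D$ need not be connected, and routing them through $D$ or through already finished neighbours recolours vertices.

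Your choice of $D$ (``closest to the last strokes'') is too vague to exclude these situations, and the bound $12$ (``each vertex costing at most three strokes'') is only asserted. Your ``central structural step'' is immediate from $s\le |C|+4$, and it says nothing about where the extra strokes sit relative to $D$, which is what matters. Note also what you are aiming for: a head that finishes all of $V(G)-D$ and a tail that only fixes $D$. The authors state exactly this only as a belief; the lemma deliberately allows tail strokes that finish vertices outside $D$.

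\textbf{The missing idea} is to avoid all accounting by never rebuilding $P$. The paper chooses an optimal plan $P$ together with an induced $P_4$ $D$ maximizing the lexicographic finishing index $f_P(D)$, so that $D$ is completed as late as possible. Let $\alpha$ be the first stroke finishing a vertex of $D$. Every stroke $i\le\alpha$ can be replaced by $(t^{-1}(c_i)+D,c_i)$ without harm: removed vertices are finished later, and added vertices either receive their template colour or are repainted later. The strokes before $\alpha$ can be sorted by $\prec$, because they overlap only in $D$, which is repainted afterwards. All later strokes stay untouched. Hence the length is automatically $s$, the plan is canonical with $k=s-\alpha$, and the only task left is to bound $s-\alpha$.

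This bound comes from exchange arguments against extremality:
\begin{itemize}
\item If stroke $\alpha+1$ finishes no vertex of $D$ and is not the last stroke, swapping strokes $\alpha$ and $\alpha+1$ through the hub $D$ postpones the finishing of $D$.
\item After each $\sigma\in\{\beta,\gamma,\delta\}$, one of the next three strokes must finish a vertex of $D$ or be the last one. Otherwise one of three explicit reorderings, which move stroke $\sigma+j$ in front of $\sigma$, is valid and increases $f_P(D)$.
\end{itemize}
Extremality is also what brings the cograph structure into play. Every induced $P_4$ has a vertex finished before these strokes, so the relevant areas induce connected cographs. Observations~\ref{obs:cograph_separator_pthree}--\ref{obs:cograph_separator_cover} then guarantee that at least one of the reorderings uses connected areas. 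Without such an extremal choice, your proposal has no mechanism that confines the non-hub strokes to a bounded window.
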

The remainder of this section prepares the proof of Lemma~\ref{lem:canonical_optplan}.
At a high level, the proof proceeds in two steps.
First, we consider the set of all optimal {\plans} for~$G$ and impose a carefully chosen ordering on them.
Second, we show that a maximal element with respect to this ordering is, up to minor adjustments, canonical.

To motivate the ordering, observe that a $(D,k,\prec)$-canonical plan in the lemma uses an induced $P_4$, namely~$D$, as a hub to finish all other vertices during the first phase.
Consequently, all vertices of~$D$ are finalized during the second, $k$-stroke phase.
Thus, in such a plan, there exists an induced $P_4$ whose vertices are painted very late.
The idea behind the ordering is to prefer plans that postpone the completion of an induced $P_4$ as far as possible.

We formalize this intuition by assigning a numerical index to every optimal {\plan} $P$.
Let $s$ be the number of strokes in every optimal plan and, likewise, in $P$.
For every induced $P_4$~$\pi$ of~$G$, we define a corresponding index as follows.
Let $a,b,c,d$ denote the vertices of~$\pi$, ordered by their finalizing indices, that is, $f_P(a) \leq f_P(b) \leq f_P(c) \leq f_P(d)$.
We then define the \emph{finishing index} of~$\pi$ in~$P$ as
\[f_P(\pi) = s^3 \cdot f_P(a) + s^2 \cdot f_P(b) + s \cdot f_P(c) + f_P(d).\]
This value is primarily determined by the finalization of~$a$, secondarily by that of~$b$, then~$c$, and finally~$d$.
The \emph{finishing index} of the plan~$P$ is defined as the maximum finishing index among all induced $P_4$ in~$G$, that is,
\[f_P = \max\{ f_P(\pi) \mid \pi \text{ is an induced } P_4 \text{ in } G \}.\]
In the proof of Lemma~\ref{lem:canonical_optplan}, this index is used to order the optimal {\plans}. 
We focus on a plan that maximizes $f_P$, intuitively delaying the finishing of some induced $P_4$ as much as possible. 
The central argument shows that such a plan can be transformed into a $(D,k,\prec)$-canonical plan for any induced $P_4$ $D$ with maximum finishing index $f_P(D) = f_P$. 
After minor adjustments, the approach demonstrates that between consecutive finishing strokes of vertices from $D$, there can be at most two strokes that do not finish any vertex of $D$. 
Equivalently, the stroke that completes $D$ can be followed by at most two additional strokes. 
This establishes the maximum distance of twelve between the first stroke that finishes a vertex of $D$ and the last stroke of the plan. 
Any violation of these limits would imply the existence of another optimal plan with a finishing index below $f_P$, yielding a contradiction.

The proof heavily relies on the flexibility of {\plans}, which is not available in the reversed strategies of the {\freeFloodIt} game. 
In fact, a frequently used operation is the liberal adjustment of {\areas} in certain strokes to enforce a desired property of a plan. 
By contrast, flooding moves only allow pivot selection, which rigidly fixes the flooded area. 
Thus, while analogous canonical sequences must exist for {\freeFloodIt}, working with them would be considerably more involved.

The following section puts $(D,k,\prec)$-canonical {\plans} to work, developing a polynomial-time algorithm that exploits their simple structure.

%

\section{Brute-forcing Canonical Paint Plans works in Polynomial Time}
\label{sec:polynomial_time_algorithm}

As the title of this section suggests, the algorithm presented here does not rely on any sophisticated procedure to construct an optimal {\plan}.
Instead, it generates a polynomial number of structured candidate plans and verifies them exhaustively.
More precisely, we enumerate all $(D,12,\prec)$-canonical {\plans} for all suitable choices of $D$, $k$, and~$\prec$.
In fact, by Lemma~\ref{lem:canonical_optplan}, we may assume that $|D|=4$ and $k\leq 12$.
Accordingly, we restrict our attention to $(D,k,\prec)$-canonical {\plans} with $|D|=4$ and $k=12$.
In the following, we refer to these plans simply as \emph{$(12,\prec)$-canonical {\plans}}.

To generate such plans, we first construct a polynomial-size set of possible \emph{heads} and a polynomial-size set of possible \emph{tails}.
By combining each head with each tail, we obtain a collection of candidate {\plans}.
If any candidate passes a verification step, it yields an optimal solution.

\paragraph{Heads.}
We begin with the construction of possible heads.
Fixing an (arbitrary) total ordering~$\prec$ of the colors already eliminates many essentially equivalent strategies.
Indeed, given a $(12,\prec)$-canonical {\plan}~$P$, its head~$H$ is uniquely determined by its dominating set~$D$ and the set~$F\subseteq C$ of colors used in~$H$.
Once $D$ and~$F$ are fixed, the head can be reconstructed by enumerating the colors $c\in F$ in increasing order with respect to~$\prec$ and applying the strokes $(t^{-1}(c)+ D,\,c)$.

Since any graph~$G$ contains at most $\binom{|V(G)|}{4}\in O(|V(G)|^4)$ induced $P_4$s, we can simply enumerate all $4$-vertex subsets and test whether they induce a $P_4$.
Likewise, if a $(D,12,\prec)$-canonical {\plan} has length~$s$, then its head uses exactly $s-12$ colors.
Because $s\leq |C|+4$, we only need to consider $\binom{|C|}{|C|+4-12} \in O(|C|^8)$ choices for~$F$.

These observations yield the following result.
\begin{restatable}{lemma}{LemHeads}
	\label{lem:heads}
	Let $G$ be a graph with template~$t$, and let $\prec$ be any ordering of the colors used in~$t$.
	The set
	\[
	H_s := \{ \operatorname{head}(P) \mid \text{$P$ is a $(12,\prec)$-canonical {\plan} of length $s$} \}
	\]
	can be generated in polynomial time.
\end{restatable}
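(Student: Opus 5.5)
The plan is to show that $H_s$ equals an explicitly enumerable family of candidate heads filtered by a polynomial-time membership test. For an induced $P_4$ $D$ and a color set $F \subseteq C$, write $\operatorname{head}(D,F)$ for the sequence of strokes $(t^{-1}(c)+D,\,c)$ for $c \in F$, taken in increasing $\prec$-order. As observed before the statement, the head of every $(12,\prec)$-canonical plan of length $s$ is $\operatorname{head}(D,F)$ for some induced $P_4$ $D$ and some $F$ with $|F| = s-12$. Conversely, such a $\operatorname{head}(D,F)$ lies in $H_s$ if and only if it can be followed by exactly $12$ strokes that turn the resulting painting $p_{D,F}$ into $t$. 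So the algorithm has three steps: enumerate all pairs $(D,F)$, build the head of each, and keep exactly those heads admitting such a $12$-stroke tail. There are $O(|V(G)|^4)$ choices of $D$ (test all $4$-subsets for inducing a $P_4$). Since $s \le |C|+4$, there are at most $O(|C|^8)$ choices of $F$ with $|F| = s-12$. Building a head takes polynomial time.

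For the filtering step, I first replace ``exactly $12$'' by ``at most $12$''. Any shorter tail can be padded with harmless strokes $(\{v\},t(v))$ applied after all other strokes, and such strokes keep the painting equal to $t$. Two cheap necessary conditions then prune candidates immediately. Every vertex outside $D$ whose template color is not in $F$ is still unpainted after the head. Each tail stroke carries a single color. Hence at most $12$ colors may be missing from $F$ among the vertices of $V(G)-D$; otherwise the candidate is discarded.

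The core of the filtering step is deciding whether $t$ is reachable from $p_{D,F}$ in at most $12$ strokes. I would handle it by reversing the process, in the spirit of Lemma~\ref{lma:recursive_painting} and Theorem~\ref{thm:painting_flooding_equivalence}. The first step is to generalize the normalization of Lemma~\ref{lma:recursive_painting} from the empty initial painting to an arbitrary initial painting $p$. The resulting claim is that if $t$ is reachable from $p$ in $k$ strokes, then it is reachable in $k$ recursive strokes, each of which paints a monochromatic area that becomes a color component. The second step is to read such a recursive tail backwards from $t$: each reversed stroke becomes a flooding move, which is determined by a pivot vertex and a color. This leaves at most $|V(G)|\cdot|C|$ choices per step, hence at most $(|V(G)|\cdot|C|)^{12}$ candidate sequences, all of which can be simulated in polynomial time. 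Unlike in Theorem~\ref{thm:painting_flooding_equivalence}, the reversed process must end exactly at $p_{D,F}$, with unpainted vertices treated as an extra color, rather than at a monochromatic painting. So the check compares the final painting with $p_{D,F}$ instead of testing for monochromaticity.

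I expect the main obstacle to be the generalized normalization, because the proof of Lemma~\ref{lma:recursive_painting} presumably uses that the very first stroke covers all of $V(G)$, and this is no longer available from a nonempty start. If that argument does not transfer, my fallback is to bound the tail areas directly. In an optimal tail, each stroke's area can be shrunk to the vertices it finalizes together with a few connector vertices. Because the strokes are few, the areas are essentially determined by the finalized sets of the at most $12$ strokes plus boundedly many extra vertices drawn from the dominating set $D$. If so, the number of candidate tails would again be polynomial, and they could be enumerated directly.
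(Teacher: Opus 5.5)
Your first step (enumerate induced $P_4$s $D$ and colour sets $F$ with $|F|=s-12$, then build $\operatorname{head}(D,F)$) is exactly the paper's proof. The paper outputs every such head and simply asserts that each one extends to a complete $(12,\prec)$-canonical plan. You are right that this is not automatic. If $12$ colours occurring in $V(G)-D$ are missing from $F$, every tail stroke must carry one of them. Then a vertex of $D$ whose template colour lies in $F$ but is not the $\prec$-largest colour of $F$ is never corrected. A superset of $H_s$ is harmless for the algorithm, because Lemma~\ref{lem:heads_and_tails} verifies every head--tail combination anyway.

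The genuine gap is your filter: the generalized normalization it rests on is false. Take a single edge $uv$ with start painting $p(u)=1$, $p(v)=2$ and target $t(u)=t(v)=3$. The stroke $(\{u,v\},3)$ reaches $t$, but no single recursive stroke does, since the only monochromatic areas of $p$ are $\{u\}$ and $\{v\}$. The proof of Lemma~\ref{lma:recursive_painting} does not break where you expected (the first stroke covering $V(G)$). It breaks earlier, where it concludes that the enlarged area $A$ is monochromatic in $p'_{j-1}$. That step uses the fact that every part of $A$ untouched by earlier strokes is still uniformly undefined.

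This failure occurs in precisely your setting, because $p_{D,F}$ is far from monochromatic. $D$ carries the $\prec$-largest colour of $F$, finished vertices carry their template colours, and vertices of $V(G)-D$ with missing colours are unpainted. Concretely, take $D=a,b,c,d$ with $t$ constant ($=c_0$) on $D$ and $F=\{c_0\}$. Attach to $a$ three pendant vertices for each of eleven further colours; this gives a co-gem-free tree. The eleven strokes $(t^{-1}(c)+D,c)$ followed by $(D,c_0)$ form a valid $12$-stroke tail, so $\operatorname{head}(D,\{c_0\})\in H_{13}$. However, each pendant is an unpainted vertex whose only neighbour is painted. So the first recursive stroke touching it must be a singleton, and every recursive tail has at least $33$ strokes. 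Your backward search over $(|V(G)|\cdot|C|)^{12}$ flooding sequences therefore rejects heads that do lie in $H_s$. Such false negatives are fatal, since the algorithm needs the head of an optimal canonical plan to survive.

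The fallback does not close the gap. Nothing in it bounds the number of possible finalized sets: a vertex whose colour occurs on several tail strokes may be finalized by any of them. So it gives no polynomial count.

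To repair the proof, you have two options:
\begin{itemize}
\item Drop the filter and output all $\operatorname{head}(D,F)$, as the paper effectively does.
\item Filter exactly by testing each candidate against the tails of Lemma~\ref{lem:tails}. This suffices because maximalization (Observation~\ref{obs:maximal}) only enlarges tail strokes and so preserves heads.
\end{itemize}
Also, count the sets $F$ using $|C|\le s\le |C|+12$ (outside this range $H_s=\emptyset$), which gives $\binom{|C|}{s-12}\le |C|^{12}$. The bound $s\le|C|+4$ alone does not make $\binom{|C|}{s-12}$ polynomial.
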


\paragraph{Tails.}
We now turn to the construction of possible tails.
Unlike the head, the tail of a $(12,\prec)$-canonical {\plan} is not structurally rigid.
Naively enumerating all possibilities would therefore lead to a prohibitive blow-up.

To control this, we impose an additional maximality condition.
Let $P=(A_1,c_1),\dots,(A_s,c_s)$ be a $(D,k,\prec)$-canonical {\plan}.
We call $P$ \emph{maximal} if there is no index $i \in [1,s]$ and vertex $v\in V(G)$ such that $A_i\subset A_i+ v$ and replacing $A_i$ by $A_i+ v$ yields another valid $(D,k,\prec)$-canonical {\plan}.
Intuitively, in a maximal plan, every stroke absorbs all vertices that can be added without violating canonicity.

Crucially, this notion of maximality only affects the tail.
Expanding any stroke in the head would contradict the definition of a canonical {\plan}.
Hence, Lemma~\ref{lem:heads} remains valid even when we restrict attention to maximal plans.

\begin{restatable}{observation}{ObsMaximal}
	\label{obs:maximal}
	For every $(D,k,\prec)$-canonical {\plan} of length~$s$, there exists a maximal $(D,k,\prec)$-canonical {\plan} of the same length.
\end{restatable}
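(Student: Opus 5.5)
The proof is a finite augmentation process. Start from any $(D,k,\prec)$-canonical {\plan} $P=(A_1,c_1),\dots,(A_s,c_s)$. As long as $P$ is not maximal, by definition there exist an index $i\in[1,s]$ and a vertex $v\in V(G)$ with $A_i\subset A_i+v$ such that replacing $A_i$ by $A_i+v$ yields another valid $(D,k,\prec)$-canonical {\plan}. We perform this replacement and repeat.

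First, each replacement preserves the length $s$. Only one {\area} changes, and the sequence of colors and the number of strokes stay the same. The definition also guarantees that the new sequence is again a valid $(D,k,\prec)$-canonical {\plan}: every {\area} is still connected, the final painting is still $t$, and conditions (i) and (ii) still hold. So every plan produced along the way is a $(D,k,\prec)$-canonical {\plan} of length $s$.

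Second, the process terminates. Consider the potential $\Phi(P)=\sum_{i=1}^{s}|A_i|$. Each replacement adds a vertex $v\notin A_i$, so $\Phi$ increases by exactly one. Moreover, $\Phi$ is bounded above by $s\cdot|V(G)|$, since $s$ is fixed throughout. Hence after at most $s\cdot|V(G)|$ replacements no further augmentation is possible. The plan reached at that point is, by definition, a maximal $(D,k,\prec)$-canonical {\plan} of length $s$, which proves the observation.

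No step is a real obstacle. The only point needing care is that maximality is defined relative to validity of the modified plan, so the invariant holds automatically and nothing about the head or the tail has to be checked separately. Head strokes are never enlarged in this process, since enlarging them would violate condition (i), but the argument does not depend on this.
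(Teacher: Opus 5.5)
Your proposal is correct and follows the paper's argument: repeatedly apply the single-vertex enlargement guaranteed by non-maximality, and stop once none is possible. The only addition is that you make termination explicit via $\Phi=\sum_i |A_i| \le s\cdot|V(G)|$, whereas the paper just appeals to the finiteness of vertices and strokes.
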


We therefore restrict our attention to maximal $(12,\prec)$-canonical {\plans}.
Since the tail consists of exactly twelve strokes, we can brute-force suitable ``seed'' vertices and colors for these strokes and then expand each area maximally.
This yields a polynomial-size superset of all possible tails.

\begin{restatable}{lemma}{LemTails}
	\label{lem:tails}
	Let $G$ be a graph with template~$t$, and let $\prec$ be any ordering of the colors used in~$t$.
	A set
	\[
	T_s \supseteq \{ \operatorname{tail}(P) \mid \text{$P$ is a maximal $(12,\prec)$-canonical {\plan} of length $s$} \}
	\]
	can be generated in polynomial time.
\end{restatable}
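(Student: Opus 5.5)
We prove Lemma~\ref{lem:tails} by guessing a bounded amount of data that determines the tail of a maximal plan and reconstructing the twelve painting areas from it greedily.

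Fix a length $s$ and a maximal $(D,12,\prec)$-canonical plan $P=(A_1,c_1),\dots,(A_s,c_s)$. Its head is determined by $D$ and the color set $F$, as in Lemma~\ref{lem:heads}. The tail consists of the strokes $(A_i,c_i)$ for $i\in[s-11,s]$. The enumeration guesses the following data: the $P_4$ $D$, in $O(|V(G)|^4)$ ways; the set $F$, in $O(|C|^8)$ ways; the twelve tail colors $c_{s-11},\dots,c_s$, in $|C|^{12}$ ways; and for each tail stroke a seed vertex $v_i\in A_i$, in $|V(G)|^{12}$ ways. If $A_i$ is empty, we guess an "empty" marker instead. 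Every stroke with $F_P(i)\neq\emptyset$ can be seeded by a vertex it finalizes. This is a polynomial number of guesses.

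From such a guess we reconstruct candidate areas. We process the tail strokes in a fixed order, say from last to first, since later strokes constrain earlier ones through finalization. For each stroke we grow $A_i$ from its seed $\{v_i\}$ by repeatedly adding a vertex $v$ whenever $A_i+v$ stays connected and the whole plan stays valid. Validity is checked by simulating the head followed by the current tail areas, with the not-yet-processed areas set to their current values, and comparing the result with $t$. The candidate tail obtained this way is put into $T_s$.

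The crux is to show that for the correct guess, taken from a maximal plan $P$, the greedy expansion reproduces exactly the areas of $P$, or at least some other valid tail of the same length; either suffices, since $T_s$ only needs to be a superset and the candidates are verified later. The idea I would pursue is that a maximal plan is a fixpoint of single-vertex expansions. Expansion is monotone: if adding $v$ to $A_i$ is valid in $P$, it stays valid for smaller areas of the other strokes in a controlled way. I expect a uniqueness argument to fail, because two different maximal areas may contain the same seed.

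To handle that, the fallback is to weaken the goal. Rather than insisting on recovering $P$, we enumerate per stroke not a single seed but a bounded-size witness set. For instance, for each $A_i$ we guess one vertex from each of the at most $O(1)$ regions where it must reach: the seed, plus vertices of $D$ and one representative per later stroke it intersects. This pins down which maximal extension is meant. Since only twelve strokes and four vertices of $D$ are involved, these witnesses remain constant in number, and the enumeration stays polynomial.

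Making this pinning argument precise is the main obstacle: one must show that a constant number of guessed vertices per stroke forces the greedy closure to coincide with $A_i$, or to produce an equally valid tail. Everything else, namely counting guesses and simulating plans in polynomial time, is routine.
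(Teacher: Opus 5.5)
\textbf{There is a gap.} Your skeleton matches the paper's: guess the twelve tail colors and one seed per tail stroke, then rebuild the areas from the last stroke backwards. The step that makes this a proof is missing, though. You leave open whether the expansion reproduces the tail of $P$, you expect uniqueness to fail, and you fall back on unspecified witness sets.

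The missing observation is that addability depends only on later strokes. For a tail stroke $i$, whether a vertex $v$ with a neighbor in $A_i$ may be added depends only on $t$, $c_i$ and the \emph{later} areas. The final color of a vertex is the color of the last stroke containing it, and canonicity constrains only the head. Hence adding $v$ keeps the plan valid if and only if $t(v)=c_i$ or $v\in A_k$ for some $k>i$. The same reasoning shows that every vertex of $A_i$ lies in $S_i=t^{-1}(c_i)+\bigcup_{k>i}A_k$.

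So once $A_{i+1},\dots,A_s$ are fixed, $A_i$ is a connected subset of $S_i$ containing the seed. Maximality then forces $A_i$ to be the whole connected component of the seed in $G[S_i]$. Otherwise some vertex of that component outside $A_i$ is adjacent to $A_i$ and could be added.

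Downward induction from the last stroke, where $S_s=t^{-1}(c_s)$, shows that the tail of a maximal plan is a function of the colors and seeds alone. Uniqueness does hold, one seed per stroke suffices, and guessing $D$ and $F$ is unnecessary. This is exactly the paper's argument, with each area computed by a BFS in $G[S_i]$.

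\textbf{Your validity test also fails as described.} While you reconstruct from the last stroke backwards, the earlier tail areas are still mere seeds. Simulating the head followed by the current tail areas will therefore in general not yield $t$, whichever vertex you add. The condition that the plan stays valid is then never met, and no area grows. The test has to be the local criterion above: $t(v)=c_i$, or $v$ lies in an already reconstructed later area. That criterion is what makes the greedy expansion well defined and equal to the component computation.

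Relatedly, your monotonicity remark points the wrong way. Addability of $v$ to $A_i$ is unaffected by $A_i$ itself and by earlier strokes, but it can be lost when later areas shrink.
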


Notice that $T_s$ may contain tails that do not belong to any canonical {\plan}.
But this is harmless: such candidates will simply fail to combine with any head.
Most importantly, $T_s$ contains only polynomially many elements, each of constant size.

Combining heads and tails yields the following.
\begin{restatable}{lemma}{LemHeadsAndTails}
	\label{lem:heads_and_tails}
	Let $G$ be a graph with template~$t$, and let $\prec$ be any ordering of the colors used in~$t$.
	The set
	\[
	P_s := \{ P \mid \text{$P$ is a maximal $(12,\prec)$-canonical {\plan} of length $s$} \}
	\]
	can be generated in polynomial time.
\end{restatable}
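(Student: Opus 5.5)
The plan is to build $P_s$ by combining the two candidate families from Lemma~\ref{lem:heads} and Lemma~\ref{lem:tails}, and then filtering the combinations.

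\textbf{Generation.} For a fixed length $s$, generate the set $H_s$ of heads and the superset $T_s$ of tails. Every element of $H_s$ carries its dominating set $D$ and its color set $F$. For each pair $(H,T) \in H_s \times T_s$, form the concatenation $P = H \cdot T$. Both sets have polynomial size, so there are polynomially many candidates. Each candidate has length at most $|C|+4$, which is polynomial in the input.

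\textbf{Filtering.} For each candidate, check the following in polynomial time.
\begin{enumerate}
\item Every tail area is connected in $G$. Head areas $t^{-1}(c)+D$ are connected automatically, because $D$ is connected and dominating.
\item Simulating the strokes from the undefined painting $p_0$ produces exactly $t$. This costs $O(s\cdot|V(G)|)$ time.
\item The head satisfies conditions (i) and (ii) for the specific $D$ it was built from. This holds by construction.
\item Maximality: for every index $i$ and every vertex $v \notin A_i$, replacing $A_i$ by $A_i+v$ does not give another valid $(D,12,\prec)$-canonical plan. This needs $O(s\cdot|V(G)|)$ re-checks, each polynomial.
\end{enumerate}
Output the candidates that pass all checks.

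\textbf{Correctness.} Soundness is immediate, since every output candidate has been verified to be a maximal $(12,\prec)$-canonical plan of length $s$. For completeness, take any maximal $(12,\prec)$-canonical plan $P$ of length $s$. By definition, $P$ splits into a head and a tail. Its head lies in $H_s$ by Lemma~\ref{lem:heads}, which stays valid for maximal plans as remarked before Observation~\ref{obs:maximal}. Its tail lies in $T_s$ by Lemma~\ref{lem:tails}. So $P$ is generated as one of the concatenations and then passes the verification.

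\textbf{Main obstacle.} The delicate step is the maximality test. Canonicity is defined relative to some $D$, so we must fix the dominating set $D$ that the head encodes. Otherwise an extended plan might count as canonical for a different $D$, and "valid" becomes ambiguous. I would resolve this by carrying $D$ along with each head and testing canonicity only with respect to that $D$. Maximality is defined per $(D,k,\prec)$, so this matches the definition. A one-vertex extension of a head stroke breaks condition (i), so it never counts; only tail extensions need to be tested.
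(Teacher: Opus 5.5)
Your proposal is correct and follows the paper's route: enumerate $H_s \times T_s$, keep the concatenations that are valid plans, and get completeness from Lemmas~\ref{lem:heads} and~\ref{lem:tails}. The only difference is how maximality is secured: you test every single-vertex extension explicitly, relative to the $D$ carried by the head, whereas the paper skips this test and argues that every valid combination is automatically maximal, because each tail area is built as the full component of $v_i$ in $G\bigl[t^{-1}(c_i)+\bigcup_{j>i}V_j\bigr]$ and so already contains every admissible neighbour. Your version is slightly more robust, since it uses only that $T_s$ is a superset of the required tails.
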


We are now ready to present the polynomial-time algorithm for {\miniaturePainting} on {\cogem}-free graphs.

\begin{algorithm}
	\caption{Polynomial-time algorithm for {\miniaturePainting} on {\cogem}-free graphs}
	\label{alg:co-gem}
	\begin{algorithmic}[1]
		\Procedure{solve}{$G,t$}
			\If{$G$ is a cograph}
				\State \Return \Call{solve\_cograph}{$G,t$}
			\EndIf
			\For{$i = c$ to $c+3$}
				\State $S \gets$ \Call{generate\_P$_s$}{$G,t,i$}
				\If{$S \neq \emptyset$}
					\State \Return an arbitrary element of $S$
				\EndIf
			\EndFor
		\EndProcedure
	\end{algorithmic}
\end{algorithm}

We conclude this section with our main result.
\begin{restatable}{theorem}{ThmMainResult}
	\label{thm:main_result}
	Let $G$ be a {\cogem}-free graph and let $t$ be a color template.
	An optimal {\plan} can be found in polynomial time.
\end{restatable}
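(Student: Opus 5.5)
We prove Theorem~\ref{thm:main_result} by assembling the earlier results around Algorithm~\ref{alg:co-gem}. The algorithm has to handle three situations: $G$ is disconnected, $G$ is a cograph, or $G$ contains an induced $P_4$.

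\textbf{Disconnected $G$.} Every stroke paints a connected area, so it lies inside a single component. Hence strokes acting on different components can be reordered freely. An optimal plan for $G$ is therefore the concatenation of optimal plans for the components $G_1,\dots,G_m$, and each $G[V(G_i)]$ is again {\cogem}-free. So it suffices to treat connected $G$ and run the algorithm once per component. This also covers the case where $G$ contains a $P_4$, since Observation~\ref{obs:cogem-free_dominatingPfour} then forces $G$ to be connected apart from nothing, i.e.\ a $P_4$ dominates $G$.

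\textbf{Cographs.} If $G$ is a cograph, which can be recognized in linear time, we invoke Theorem~\ref{cor:painting_cographs}.

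\textbf{$G$ contains an induced $P_4$.} Fix an arbitrary total order $\prec$ on the colors $C$ used in $t$.

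The first step is to establish the lower and upper bounds on $s$. Every optimal plan has length $s\ge |C|$, because each color needs at least one stroke. The two-phase hub strategy with $D$ an induced dominating $P_4$ yields $s\le |C|+4$.

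The second step is to show that some optimal plan lies in $P_s$ for the optimal $s$. By Lemma~\ref{lem:canonical_optplan}, some optimal plan $P$ is $(D,k,\prec)$-canonical with $D$ an induced $P_4$ and $k\le 12$. We must pad it to $k=12$ exactly.

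There is a catch when the head is too short to donate strokes: if $s-k<12-k$, we cannot shift head strokes into the tail. To handle this, observe that the last $12-k$ head strokes can simply be reinterpreted as tail strokes, because the tail has no structural constraints. Condition~(i) is only imposed on the first $s-12$ strokes, and condition~(ii) is inherited by prefixes.

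This reinterpretation works whenever $s\ge 12$. For $s<12$ we avoid the issue altogether: since $s\le|C|+4$ is then bounded by a constant, we brute-force all plans of length at most $11$. To keep this polynomial, we require each area to be a connected set generated by at most a constant number of seeds plus maximal expansion, reusing Lemma~\ref{lem:tails}'s argument, which applies to every stroke.

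Having reduced to a $(12,\prec)$-canonical plan, Observation~\ref{obs:maximal} gives a maximal such plan of the same length $s$. Hence $P_s\neq\emptyset$ for the optimal $s$.

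The third step is correctness of the loop. The algorithm tries $i=|C|,\dots,|C|+4$ in increasing order; the bound $c+3$ in the listing should be $c+4$. Each set $P_i$ is computed in polynomial time by Lemma~\ref{lem:heads_and_tails}. Every element of $P_i$ is a valid plan, so the first nonempty $P_i$ has $i\le s$. By the second step, $P_s\ne\emptyset$, so the algorithm returns a plan of length exactly the optimum.

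The running time is a constant number of polynomial-time generations, plus the polynomial-time cograph case.

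The main obstacle is the padding from $k\le12$ to exactly $12$ together with the small-$s$ corner case. Everything else is a direct citation of Lemmas~\ref{lem:canonical_optplan} and~\ref{lem:heads_and_tails}.
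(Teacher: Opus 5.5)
Your proof is correct and follows the paper's own argument: the cograph case via Theorem~\ref{cor:painting_cographs}; otherwise $|C|\le s\le |C|+O(1)$, with Lemma~\ref{lem:canonical_optplan}, Observation~\ref{obs:maximal} and Lemma~\ref{lem:heads_and_tails} giving $P_s\neq\emptyset$, and the first non-empty $P_i$ is optimal because every member of $P_i$ is a valid plan (this yields $i\ge s$, not the $i\le s$ you wrote). Your additions are sound and close points the paper passes over: the padding from $k\le 12$ to $k=12$ is exactly the prefix inheritance you describe, and the seed-and-expand brute force for $s<12$ is genuinely needed, since no head with $s-12$ colors exists there and the paper's $P_s$ would be empty; the listing's bound $c+3$, however, is already valid (make $t(v)$ for some $v\in D$ the last hub color, leaving three vertices of $D$ to fix), so going to $c+4$ is harmless but not a needed correction.
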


%

\section{Conclusion}
\label{sec:conclusion}

We have introduced the {\miniaturePainting} problem as a tool to analyze and solve instances of {\freeFloodIt} on structured graph classes. 
By formalizing canonical painting plans and exploiting the structure of small connected dominating sets, we developed a polynomial-time algorithm for {\cogem}-free graphs. 
Along the way, we also recover a simple polynomial-time solution for cographs.

Looking forward, our results suggest several directions for further research. 
One is to reduce the exponent of the current polynomial-time algorithm.
Another is to explore whether the concept of connected dominating sets and canonical painting plans can be adapted to other graph classes, potentially identifying new tractable cases of {\freeFloodIt}.

%

\section{Technical Proofs}
\label{sec:proofs}

This section collects the proofs of all theorems, lemmas, and observations from the main text, presented in the same order as they appear.
Each result is restated for clarity before its proof.

\ObsCographSeparatorPThree*
\begin{proof}
	Let $v$ and $w$ be in different connected components of $G-S$ and consider any $v,w$-path $P$ in $G$ with the fewest number of vertices.
	Since $P$ must contain at least one vertex of $S$ and because $G$ is $P_4$-free, the reader easily verifies that $P$ is an induced $P_3$ on vertices $v$, $w$ and one more vertex from $S$.
\end{proof}

\ObsCographSeparatorDominate*
\begin{proof}
	For any pair of vertices $v$ and $w$ in different connected components of $G-S$, Observation~\ref{obs:cograph_separator_pthree} implies a common neighbor of $v$ and $w$ in $S$.
	Hence, $S$ specifically dominates $v$ and $w$ and, by the arbitrarity of choice, $S$ is generally a dominating set of $G$.
\end{proof}

\ObsCographSeparatorCover*
\begin{proof}
	If there was $x \in V(G) - (S + T)$ then we could choose a vertex $v$ in the graph $G-S$ that is not in the same connected component as $x$ and, similarly, a vertex $z$ from another component of $G-T$.
	According to Observation~\ref{obs:cograph_separator_pthree}, there are vertices $w \in S$ and $y \in T$ such that $v,w,x$ and $x,y,z$ are induced $P_3$ in $G$.
	The vertices $v,w,x,y,z$ are distinct by preconditions, choice, and because $v = z$ implies an $x,z$-path $x,w,v=z$ that does not pass $T$.
	If the edges $wz$ or $vz$ existed there would be $x,z$-paths $x,w,z$ or $x,w,v,z$ that do not pass $T$ and an edge $vy$ would imply the $v,x$-path $v,y,x$ that misses $S$,
	Now, independent of $wy \in E(G)$, there is an induced $P_4$ in $G$, namely $v,w,y,z$ or $v,w,x,y$.
	This is a contradiction to the assumption that $x$ exists.
\end{proof}

\ObsCogemfreeDominatingPFour*
\begin{proof}
	If there was a non-dominating $P_4$ in a {\cogem}-free graph $G$, say on the vertices $a,b,c,d$ then there would be a vertex $x$ that is not adjacent to any of $a,b,c,d$ and, hence, $\{a,b,c,d,x\}$ would induce a {\cogem}, a contradiction.
\end{proof}

\LmaRecursivePainting*
\begin{proof}
	We only have to show that an $s$-stroke {\plan} for $t$ implies the existence of a {\recursive} one that has the same number of moves and still paints $t$ onto $G$.
	For a start, we argue the existence of a plan that solely consists of {\recursive} strokes.
	So, let $P = (A_1,c_1), \dots, (A_s,c_s)$ be any $s$-stroke {\plan} for $t$ that minimizes $j$, the last (that means, the maximum) index of a non-{\recursive} stroke.
	We would simply let $j = 0$, in case $P$ is free of non-{\recursive} strokes.
	
	For a contradiction, assume that, $j > 0$ and let $(A_j, c_j)$ be the respective stroke in $P$.
	Moreover, we define $A$ as that superset of {\area} $A_j$ that is a $c_j$-colored component of $G$ in the painting $p_j$.
	We create a new {\plan} $P'$ from $P$ by sequencing, for all $i \in [1,s]$, the strokes $(A'_i,c_i)$, where $A'_i = A_i+A$, if $i \leq j$ and $A_i \cap A$ is not empty, and, otherwise, $A'_i = A_i$.
	Apparently, $A'_1, \dots, A'_s$ are valid {\areas} because $A_1, \dots, A_s$ and $A$ are connected vertex sets.
	In fact, this makes every $A'_i$ a connected vertex set, in particular when $A'_i = A_i + A$, since then $A_i \cap A$ is not empty.
	Let $p'_0, p'_1, \dots, p'_s$ be the paintings of $P'$.
	
	We conclude that the paintings $p_j$ and $p'_j$ are the same as follows:
	Clearly, $A = A'_j$ and, so, for all $v \in A$, we have $p_j(v) = c_j$ by the definition of $A$ and $p'_j(v) = c_j$ by the stroke $(A'_j, c_j)$ in $P'$.
	For every vertex $v \in V(G)-A$, consider the last stroke $(A_i, c_i)$ in $P$ (that means with maximum $i$) where $v \in A_i$ and $i \leq j$.
	Notice that $i$ is not necessarily $f_P(v)$.
	Since $A_i - A = A'_i - A$, we get that $P'$ also paints $v$ in stroke $i$.
	Now consider the last stroke $(A'_{i'}, c_{i'})$ in $P'$ (that means with maximum $i'$) where $v \in A'_{i'}$ and $i' \leq j$.
	Again, $i'$ is usually not $f_{P'}(v)$.
	By the same argument as before, we find that $P$ also paints $v$ in stroke $i'$.
	Since both, $i$ and $i'$, are maximal indices within $[1,j]$ where $v$ is painted, it has to be $i=i'$.
	Thus, $p_j(v) = p'_j(v) = c_i$ and we have the equivalence of $p_j$ and $p_j'$.

	By definition, $(A_i, c_i) = (A'_i, c_i)$ for all $i \in [j+1,s]$ and, thus, $p'_i$ is the same as $p_i$ for every $i \geq j$.
	Hence, $P'$ still paints the template $t = p_s = p'_s$ and, for all $i > j$, the stroke $i$ is {\recursive} in $P'$ just like in $P$.
	
	We derive the contradiction by arguing that stroke $(A'_j,c_j)$ is {\recursive} in $P'$.
	But this is fairly obvious.
	Firstly, every previous stroke in $P'$ either evades or fills $A'_j = A$ entirely.
	This means that, with respect to $p'_{j-1}$, the {\area} $A'_j$ is monochromatic.
	Moreover, $A$ has been chosen as the superset of $A_j$ that is a $c_j$-colored component of $G$ with respect to painting $p_j$.
	Since $p'_j = p_j$, we can conclude that $A'_j$ is a color component of $p'_j$, too.

	Consequently, $P'$ is an $s$-stroke {\plan} for $t$ where the maximum index of a non-{\recursive} stroke is lower than $j$.
	This is a contradiction to the selection of $P$ and, ultimately, to the assumption of $j > 0$.
	Hence, $P$ must have been made of {\recursive} strokes, only, in the first place.

	\medskip
	It remains to turn $P$ into a {\recursive} {\plan}.
	For this, we simply replace the first stroke $(A_1, c_1)$ with $(V(G), c_1)$, which is still a {\recursive} one, as the reader easily verifies.
	The effect of this modification is rather a global one as it has the potential of changing all paintings $p_1, \dots, p_s$.
	But it is easy to see that the impact is also mild and cannot change the semantic.
	In fact, the only change happens, if a stroke $(A_i,c_i)$ painted an {\area} that has been monochromatically undefined in $p_{i-1}$ before the modification.
	Afterwards, the stroke simply paints the same {\area} $A_i$, but it is monochromatically $c_1$-colored.
	The result with respect to $A_i$ is the same -- in $p_i$ it is a $c_i$-colored component of $G$.
	Since $t$ is a total function, the modification preserves $t = p_s$.
\end{proof}

\ThmPaintingFloodingEquivalence*
\begin{proof}
	With the help of Lemma~\ref{lma:recursive_painting}, we are in the stronger position where we can even work with {\recursive} {\plans}.
	We argue via induction on $s$ and, for the start, we let $s = 1$.
	Then, there is a {\plan} for $t$ that consists of a single stroke $(A_1, c_1)$, if and only if it is already a {\recursive} plan, if and only if $t$ is constant function that assigns color $c_1 = t(v)$ to all vertices $v \in V = A_1$, if and only if $G$ is already monochromatic under the initial painting $t$, if and only if there is a flooding sequence of $s-1 = 0$ moves.

	\medskip
	For the induction step, let $s > 1$ and assume that the theorem is true for all $s'$-{\plans} and flooding sequences of $s'-1$ moves, respectively, where $s' < s$.
	By Lemma~\ref{lma:recursive_painting}, there is $s$-stroke {\plan} for $t$, if and only if there is a {\recursive} plan $P = (A_1,c_1), \dots, (A_s,c_s)$ that paints $t$ onto $G$.
	According to the definition of {\miniaturePainting}, the latter is true, if and only if there are paintings $p_0, p_1, \dots, p_s$ with the undefined function $p_0$ and $p_s = t$ and, for all $i \in [1,s]$, the painting $p_i$ is derived from $p_{i-1}$ by defining for all $v \in V$ that $p_i(v) = c_i$, if $v \in A_i$, and, otherwise, $p_i(v) = p_{i-1}(v)$.
	Apparently, this is equivalent to (i) the existence of the {\recursive}, $(s-1)$-{\plan} $P' = (A_1,c_1), \dots, (A_{s-1},c_{s-1})$ for the painting of $t' = p_{s-1}$ onto $G$ (notice that $t'$ is a total function and, thus, a valid template) and (ii) that $t$ is derived from $t'$ by letting $t(v) = c_s$ for all $v \in A_s$ and $t(v) = t'(v)$ for all $v \in V-A_s$.
	Since $P$ is {\recursive}, the {\area} $A_s$ is monochromatic in painting $t'$, and we refer to its color by $c = t'(A_s)$.
	Recall also that $A_s$ becomes a $c_s$-colored component of $G$ in the painting $t$.

	By the induction assumption and the definition of {\freeFloodIt}, the previous statement holds, if and only if (i) there is flooding sequence $S'$ of $s-2$ moves for $G$ when initially painted with $t'$ and (ii) making a flood move $(v,c)$ with $v \in A_s$ on the $t$-painted graph $G$ creates the painting $t'$ on $G$.
	Observe that part (ii) of the statement is valid, because $A_s$ is a $c_s$-colored component of $G$ and, hence, flooding this region with $c$, restores the monochromatic region $A_s$ of $t'$.
	Finally, by the definition of flooding sequences, the last claim is true, if and only if $S = (v,c) + S'$ is a $(s-1)$-move flooding sequence for $G$ when it is initially painted with $t$.
	Here, $(v,c)+S'$ means the sequence obtained from prepending the move $(v,c)$ to the sequence $S'$.
	This completes the proof.
\end{proof}

\ThmPaintingCographs*
\begin{proof}
	Let $G$ be a cograph and $t$ a template for $G$.
	We can assume that $G$ is connected as, otherwise, we could process the connected components individually.
	Either $G$ is a single vertex, which is a trivial case, or it contains dominating edges.
	This is because it is commonly known that a connected cograph $G$ is the full join of two cographs $G_1$ and $G_2$ (see for instance~\cite{Corneil1985}) and, clearly, every edge between $G_1$ and $G_2$ is dominating.

	If there is a dominating edge $v_1v_2$ where, without loss of generality, $t^{-1}(t(v_1))$ is connected, then the reader easily verifies that painting {\area} $t^{-1}(c) + \{v_1,v_2\}$ with $c$ for all $c \in C-t(v_1)$ in arbitrary order and, at the end, {\area} $t^{-1}(t(v_1))$ with $t(v_1)$ yields a $|C|$-stroke {\plan} for $G$.
	Because this is the minimum possible length of any {\plan} for $t$, it follows that it must be optimal.
	Otherwise, if such an edge does not exist, taking an arbitrary dominating edge $v_1v_2$ and appending the $|C|-1$ initial strokes of the above plan with the strokes $(t^{-1}(t(v_1))+v_2, t(v_1))$ and $(\{v_2\}, t(v_2))$ creates a $|C|+1$-stroke plan.

	Observe that the second case is optimal, as well.
	To see this, we only have to show that there is no plan with $|C|$ strokes.
	Such a $|C|$-stroke plan has only one stroke for every used color.
	So, the last stroke would necessarily have to paint the {\area} $t^{-1}(c)$ for some $c \in C$ because this has not happened before and it cannot override anything else.
	But now, taking any vertex $v_1 \in t^{-1}(c)$ and any vertex $v_2$ from the cograph of $G_1$ or $G_2$ that does not contain $v_1$ obviously presents a dominating edge of the above type.
	This is a contradiction.

	Obviously, all this works in polynomial time since we only have to check the $|E(G)|$ edges for (i) being dominating and (ii) the connectedness-property, which is done for every eadfe using linear-time BFS.
\end{proof}

\LemCanonicalOptplan*
\begin{proof}
	Let $P = (A_1, c_1), (A_2, c_2), \dots, (A_s,c_s)$ be an optimal {\plan} for $t$ on $G$ whose finishing index $f_P$ is maximal and let $D$ be an induced $P_4$ of $G$ whose finishing index with respect to $P$ is maximal.
	Note that neither $P$ nor $D$ need be unique and that $f_P = f_P(D)$.
	Let $a,b,c,d$ be the vertices of $D$, ordered by their finalizing indices, and set
	$\alpha = f_P(a)$, $\beta = f_P(b)$, $\gamma = f_P(c)$, and $\delta = f_P(d)$.
	Thus, $\alpha \leq \beta \leq \gamma \leq \delta$.

	We first observe that, without loss of generality, we may assume that for every stroke index $i \leq \alpha$ it holds that
	$A_i = t^{-1}(c_i) + D$.
	Indeed, suppose this is not the case for some stroke $(A_i,c_i)$.
	Replacing $A_i$ by $A'_i = t^{-1}(c_i) + D$ does not invalidate $P$ as a valid {\plan} for $t$.
	To see this, consider any vertex $v \in V(G)$ whose membership differs between $A_i$ and $A'_i$.
	If $f_P(v) = i$, then $v \in A_i$ and, hence, $t(v) = c_i$, which implies $v \in t^{-1}(c_i) \subseteq A'_i$, a contradiction.
	If $f_P(v) < i$, then repainting $v$ with color $c_i$ at stroke $i$ is correct and no subsequent stroke affects $v$.
	Finally, if $f_P(v) > i$ -- as is the case for all $v \in \{b,c,d\}$ -- then any change to the color of $v$ at stroke $i$ is irrelevant, since it will be overwritten by a later stroke.
	Thus, the modification preserves correctness.

	Based on this observation, we may further assume, again without loss of generality, that
	$i < j$ implies $c_i \prec c_j$ for all $i,j \in [1,s-\alpha-1]$.
	If this ordering is violated by two strokes $(A_i,c_i)$ and $(A_j,c_j)$ with $i<j$, then swapping these strokes yields a valid plan.
	Indeed, if the swap altered the coloring of any vertex after the first $\alpha-1$ strokes, then that vertex would necessarily lie in $D$, since all {\areas} up to index $\alpha$ intersect only in $D$.
	However, stroke $\alpha$ repaints all vertices of $D$ entirely, because $A_\alpha = t^{-1}(c_\alpha) + D$, and thus restores the correct colors.

	Consequently, to show that $P$ is $(D,k,\prec)$-canonical for some $k \leq 12$, it remains only to prove that $s - \alpha \leq 12$.
	The key idea is to show that between any two consecutive values among $\alpha$, $\beta$, $\gamma$, and $\delta$ there can be at most two intermediate strokes.
	To this end, let $p_0, p_1, \dots, p_s$ be the paintings created by $P$, thus, with $p_s = t$.

	We begin with the case of $\alpha$, which, on the one hand, is slightly simpler than the others and, on the other hand, cannot build on a property of the unfinished rest graph of $G$ that is crucial later.
	\begin{claim}\label{cla:canonical_optplan:one}
		If $\alpha < s$, then $\alpha + 1 \in \{\beta, \gamma, \delta, s\}$.
	\end{claim}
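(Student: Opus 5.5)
Suppose, for a contradiction, that $\alpha < s$ and that stroke $\alpha+1$ finalizes none of $b,c,d$, i.e.\ $\alpha+1 \notin \{\beta,\gamma,\delta\}$ and $\alpha+1<s$. The plan is to build another optimal plan $P'$ of the same length $s$ with $f_{P'} > f_P$, contradicting the choice of $P$. The natural move is to postpone the finalization of $a$ past stroke $\alpha+1$, so that the finishing index of $D$ rises in its most significant term. The idea is to exchange the roles of strokes $\alpha$ and $\alpha+1$.

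By the normalization already established, every stroke up to $\alpha$ is of the form $A_i = t^{-1}(c_i)+D$, so $A_\alpha$ is the hub stroke that finalizes $a$ (and possibly other vertices of color $c_\alpha$). Stroke $\alpha+1$ finalizes some nonempty set $F_P(\alpha+1)$ disjoint from $\{b,c,d\}$; it may or may not contain $a$'s neighbors.

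\textbf{Step 1 (the swap).} Define $P'$ by replacing stroke $\alpha+1$ with $(t^{-1}(c_{\alpha+1})+D,\,c_{\alpha+1})$ and moving it to position $\alpha$. Then put the old stroke $\alpha$, restricted to what it must still finalize, at position $\alpha+1$, i.e.\ $(t^{-1}(c_\alpha)+D,\,c_\alpha)$. Since $D$ is connected and dominating (Observation~\ref{obs:cogem-free_dominatingPfour}), both are valid painting areas.

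\textbf{Step 2 (correctness).} Vertices outside $D$ get their final color at the stroke of their template color, and no later stroke touches them unless it did so in $P$ as well. The strokes $\beta,\gamma,\delta$ and beyond are unchanged and still repaint $b,c,d$. Vertex $a$ is now finalized at $\alpha+1$, provided no later stroke in $P$ covers $a$ — true since $f_P(a)=\alpha$. The delicate part is that the old stroke $\alpha+1$ had an arbitrary area that may have overwritten vertices finalized earlier; enlarging it to $t^{-1}(c_{\alpha+1})+D$ only paints vertices either with their template color or vertices finalized later. This is the same case analysis as in the normalization paragraph.

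\textbf{Step 3 (index increase).} In $P'$, $f_{P'}(a)=\alpha+1$ while $b,c,d$ keep indices $\ge \alpha+1$. So the sorted quadruple of $D$ increases in the lexicographically dominant coordinate, giving $f_{P'}(D) > f_P(D)=f_P$ and hence $f_{P'}>f_P$. This contradicts maximality, since $P'$ still has $s$ strokes and is optimal.

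The main obstacle I expect is Step 2 when $F_P(\alpha+1)$ overlaps vertices whose correct color depends on the original, possibly non-hub, area $A_{\alpha+1}$ in combination with later strokes. For instance, a vertex $v\notin D$ with $f_P(v)>\alpha+1$ may have relied on $A_{\alpha+1}$ for connectivity of later strokes; but later areas are unchanged sets and remain connected, so this is harmless. The other risk is a tie $\alpha=\beta$, which would leave $b$ at index $\alpha$. I would handle it by noting that the hub stroke finalizes all of $D\cap t^{-1}(c_\alpha)$, and by placing the new stroke $\alpha+1$ as $(t^{-1}(c_\alpha)+D)$ so that all such vertices advance together. The quadruple still increases, because its first coordinate rises and no coordinate decreases.
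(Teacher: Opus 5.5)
Your proposal is correct and follows essentially the paper's argument: you swap strokes $\alpha$ and $\alpha+1$ and route both through the connected dominating hub $D$, so every vertex of $D$ finished at $\alpha$ is now finished at $\alpha+1$. No other finishing index of $D$ changes, since none equals $\alpha+1$ by assumption, so $f_P(D)$ strictly increases. The only difference is cosmetic: you use the areas $t^{-1}(c_{\alpha+1})+D$ and $t^{-1}(c_\alpha)+D$ where the paper uses $A_{\alpha+1}+D$ and $A_\alpha-A_{\alpha+1}+D$. Your choice works because only the last stroke covering a vertex determines its final color. Note that your replacement may also drop vertices of $A_{\alpha+1}$ rather than only enlarge it, which is harmless because those vertices are finished later by unchanged strokes.
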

	Suppose, for the sake of contradiction, that $\alpha + 1 \notin \{\beta, \gamma, \delta, s\}$.
	Then, for every vertex $v \in D$, it clearly holds that either $f_P(v) = \alpha$ or $f_P(v) > \alpha + 1$.

	We now construct a new {\plan} $P'$ from $P$ by replacing strokes $\alpha$ and $\alpha+1$ with strokes $(A'_{\alpha}, c'_{\alpha})$ and $(A'_{\alpha+1}, c'_{\alpha+1})$, defined as $A'_{\alpha} = A_{\alpha+1} + D$, $c'_{\alpha} = c_{\alpha+1}$, $A'_{\alpha+1} = A_{\alpha} - A_{\alpha+1} + D$, and $c'_{\alpha+1} = c_{\alpha}$.
	Intuitively, this construction amounts to swapping the two strokes while routing both through the dominating set $D$.
	Both $A'_{\alpha}$ and $A'_{\alpha+1}$ are valid {\areas}, since they are connected via~$D$.

	We claim that after stroke $\alpha+1$, every vertex $v \in V(G) - D$ has the same color under $P'$ as under $P$.
	Indeed, neither of the modified strokes affects any vertex outside $A_{\alpha} + A_{\alpha+1} + D$.
	Moreover, every vertex $v \in A_{\alpha} - A_{\alpha+1} - D$ is correctly painted with color $c_{\alpha} = p_{\alpha+1}(v)$ by stroke $\alpha+1$ of $P'$, since $A'_{\alpha+1}$ contains this set.
	Similarly, every vertex $v \in A_{\alpha+1} - D$ receives color $c_{\alpha+1} = p_{\alpha+1}(v)$ under $P'$, because $A'_{\alpha} \supseteq A_{\alpha+1}$ and $A'_{\alpha+1}$ does not recolor vertices of $A_{\alpha+1} - D$.

	For vertices $v \in D$, those with $f_P(v) = \alpha$ remain colored with $c_{\alpha} = p_{\alpha+1}(v)$ under $P'$, as stroke $\alpha+1$ repaints all of $D$ with this color.
	By assumption, all remaining vertices $v \in D$ satisfy $f_P(v) > \alpha+1$ and are therefore finalized identically by $P$ and $P'$.
	Consequently, $P'$ is still an optimal {\plan} for~$t$.
	However, with respect to $P'$, the induced $P_4$ given by $D$ has finishing index $f_{P'}(D) = f_P(D) + 1$, implying $f_{P'} > f_P$.
	This contradicts the maximality of $f_P$ and completes the proof of Claim~\ref{cla:canonical_optplan:one}.

	\medskip
	To implement the same idea for the remaining indices, we again derive a contradiction to the maximality of $f_P$.
	This time, however, we can no longer use $D$ as a hub, which makes the argument more delicate.
	\begin{claim}\label{cla:canonical_optplan:two}
		For every $\sigma \in \{\beta, \gamma, \delta\}$ with $\sigma + 3 \leq s$, at least one of the strokes $\sigma+1$, $\sigma+2$, or $\sigma+3$ belongs to $\{\gamma, \delta, s\}$.
	\end{claim}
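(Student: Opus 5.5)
We argue by contradiction, as in Claim~\ref{cla:canonical_optplan:one}. Fix $\sigma \in \{\beta,\gamma,\delta\}$ with $\sigma+3 \le s$ and suppose none of the strokes $\sigma+1,\sigma+2,\sigma+3$ lies in $\{\gamma,\delta,s\}$. Then no vertex of $D$ is finalized in these three strokes, and each vertex of $D$ is finalized either at or before $\sigma$, or after $\sigma+3$. Call the first set $D_{\le}$ (it contains $a,b$ and possibly $c$) and the second $D_{>}$; by the case assumption $D_{>} \ne \emptyset$. The goal is to build an optimal plan $P'$ of the same length in which some vertex of $D_{\le}$ is finalized strictly later while no vertex of $D$ is finalized earlier. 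Then $f_{P'}(D) > f_P(D) = f_P$, which contradicts the maximality of $f_P$.

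The $D$-hub trick of Claim~\ref{cla:canonical_optplan:one} is no longer available, since the vertices of $D_{>}$ are still unfinished. So the routing must use the structure of the part of $G$ that is still unfinished. Consider $U = \{v \mid f_P(v) > \sigma\}$, the vertices not yet finished after stroke $\sigma$. The plan is to use the three strokes $\sigma+1,\sigma+2,\sigma+3$ together with the cogem-freeness of $G$, via Observation~\ref{obs:cogem-free_dominatingPfour}, to find a connected set $R$ that lies among vertices repainted anyway, or that is harmless to repaint because every vertex in it is finalized later. The set $R$ must join the finishing stroke $\sigma$ to one of the later strokes.

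Concretely, I would try to merge the stroke $\sigma$ into one of the strokes $\sigma+j$, $j \in \{1,2,3\}$, that has the same color, or to swap $\sigma$ past one of these strokes while extending its area by vertices of $U$. There are two cases.
\begin{enumerate}
\item \emph{Some area $A_{\sigma+j}$ is disjoint from the vertex $x \in D_{\le}$ finalized at $\sigma$ but touches its neighborhood.} Then moving stroke $\sigma$ after $\sigma+j$ with area $A_\sigma - A_{\sigma+j}$ plus a connector works exactly as in Claim~\ref{cla:canonical_optplan:one}, because the connector consists only of vertices finalized later.
\item \emph{Otherwise.} Then the three later areas avoid $N[x]$. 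If moreover they are pairwise non-overlapping in a suitable sense, take a vertex from each of them together with $x$ and its $D$-neighbors. This should give an induced $P_4$ with an independent vertex, that is, a cogem, which is the contradiction.
\end{enumerate}
The reason three strokes are needed and not two is the case analysis in the second case. With only two intermediate strokes one cannot force the non-adjacent configuration.

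The main obstacle is this second case. Extracting an induced cogem requires controlling adjacencies between the later areas and $D$, and checking that each swap preserves the colors of all vertices finalized between $\sigma$ and $\sigma+3$. I would first attempt the swap using a single connecting vertex $y \in D_{>}$. This vertex is adjacent to $x$ or connected to it along $D$, so $D_{>} \cup \{x\}$ is connected through $D$ itself once we note that $D$ is a path. The point would be that routing through the still-unfinished part of $D$ is legal, because those vertices are repainted later. Only if this simple route fails would I fall back on the cogem argument.
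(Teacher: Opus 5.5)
There is a genuine gap. The proposal never identifies the structure that makes the rerouting possible, and the mechanisms it does offer do not work.

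\textbf{The missing idea.} The maximality of $f_P=f_P(D)$ forces the unfinished part of $G$ to be a cograph. Any induced $P_4$ whose vertices are all finalized after $\alpha$ would have a larger finishing index than $D$. Hence the paper gets that each $G[A_i]$, $i\in\{\sigma,\sigma+1,\sigma+2,\sigma+3\}$, is a connected cograph. Cogem-freeness plays no role in this claim. The paper then proceeds in three steps:
\begin{enumerate}
\item It makes $A_\sigma,\dots,A_{\sigma+3}$ laminar without loss of generality, by enlarging each earlier area by every later area it meets.
\item It considers the three plans that move stroke $\sigma+j$ in front of stroke $\sigma$ and delete $F_P(\sigma+j)$ from the areas of strokes $\sigma,\dots,\sigma+j-1$.
\item Only connectivity of these areas is then at stake, and it is settled by cograph separator facts: separated vertices have a common neighbour in the separator, separators dominate, and two disjoint separators cover the graph. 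The last fact is impossible here because the non-empty $F_P(\sigma)$ (resp.\ $F_P(\sigma+1)$) is disjoint from both separators.
\end{enumerate}
The final configuration yields an induced $P_4$ inside the cograph $G[A_\sigma]$. This separator chain, not a cogem configuration, is why three strokes suffice.

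\textbf{Where your steps fail.}
\begin{itemize}
\item For $\sigma=\delta$, which is the case bounding $s-\delta$, all of $D$ is finished by stroke $\sigma$. So $D_{>}=\emptyset$, contrary to your assertion.
\item Even when $D_{>}\neq\emptyset$, the set $D_{>}+x$ need not be connected. Take $\sigma=\beta>\alpha$, $x=b$, and $D$ the path $b,a,c,d$. The only $D$-neighbour of $b$ is $a$, which is already finished, so routing through $D$ would destroy it.
\item More fundamentally, connecting $x$ alone is not the issue. The moved stroke must still recolor all of $F_P(\sigma)$. It must also avoid every vertex finalized by the strokes it is moved past; for $j\ge 2$ that is $F_P(\sigma+1)+\dots+F_P(\sigma+j)$, not just $A_{\sigma+j}$. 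Removing these sets can split $A_\sigma$ into several components. This is exactly the hard case ($A_{\sigma+1}\subseteq A_\sigma$ with $F_P(\sigma+1)$ separating $G[A_\sigma]$), and your case 1 hides it behind ``plus a connector'' without any source for such a connector.
\item Case 2 yields no cogem. With $x$ as the isolated vertex you would need an induced $P_4$ in $G-N[x]$, and nothing in the setting supplies one; the vertices you collect have no forced adjacencies.
\end{itemize}
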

	For brevity, let $x = \sigma+1$, $y = \sigma+2$, and $z = \sigma+3$.
	In the following, we assume for all $i \in \{\sigma,x,y\}$ and all $j \in \{i+1, \dots, z\}$ that either $A_i \supseteq A_j$ or $A_i \cap A_j = \emptyset$, and we first justify that this assumption is without loss of generality.

	Indeed, if this condition were violated, we could replace $A_i$ in stroke $i$ by a set $A'_i$, defined as the smallest superset of $A_i$ that contains the entire {\area} $A_j$ for every $j \in \{i+1,\dots,z\}$ that fulfills $A_i \cap A_j \neq \emptyset$.
	Clearly, $A'_i$ is a valid {\area}, since it is the union of at most four connected sets among $A_\sigma, A_x, A_y$, and $A_z$.
	Moreover, the painting $p_z$ produced by the modified plan coincides with that of $P$.
	This holds because all paintings up to $p_{i-1}$ remain unchanged and any vertex $v \in A'_i - A_i$ that is painted prematurely in stroke $i$ is corrected by the last subsequent stroke $A_j$ with $v \in A_j$ and $j \in \{i+1,\dots,z\}$.
	As no strokes after $z$ are altered, the resulting sequence remains a valid {\plan} for $t$.
	Finally, the finishing times of all vertices remain unchanged, and hence the finishing index is preserved.
	Thus, the assumption is justified.

	Now assume for a contradiction that $x,y,z \notin \{\gamma, \delta, s\}$.
	We claim that at least one of the following three sequences -- each of which moves one of the strokes $x$, $y$, or $z$ directly in front of $\sigma$ -- defines a valid {\plan}.
	Since this postpones stroke $\sigma$ by exactly one position, the finishing index of $D$ would increase, contradicting the maximality of $f_P$.
	\begin{table}[h]
		\begin{center}	
			\begin{tabular}{c | c | c}
				Plan $P_1$ & Plan $P_2$ & Plan $P_3$ \\
				\hline
				$(A_1, c_1)$ 					& $(A_1, c_1)$ & 					$(A_1, c_1)$ \\
				\vdots 							& \vdots 							& \vdots \\
				$(A_x, c_x)$               		& $(A_y, c_y)$               		& $(A_z, c_z)$ \\
				$(A_\sigma - F_P(x), c_\sigma)$	& $(A_\sigma - F_P(y), c_\sigma)$	& $(A_\sigma - F_P(z), c_\sigma)$ \\
				$(A_y, c_y)$               		& $(A_x - F_P(y), c_x)$ 			& $(A_x - F_P(z), c_x)$ \\
				$(A_z, c_z)$               		& $(A_z, c_z)$               		& $(A_y - F_P(z), c_y)$ \\
				\vdots 							& \vdots 							& \vdots \\
				$(A_s, c_s)$ 					& $(A_s, c_s)$ 						& $(A_s, c_s)$
			\end{tabular}
		\end{center}
		\caption{Adjusted sequences for $P$.}
		\label{table:strats}
	\end{table}
	
	For each $j \in \{1,2,3\}$, the revised sequence $P_j$ moves the stroke $(A_{\sigma+j},c_{\sigma+j})$ directly in front of $(A_\sigma,c_\sigma)$, thereby delaying the latter by one position.
	This requires slight adjustments to $A_\sigma$ and the subsequent {\areas}.
	We now show that each such sequence correctly reproduces $t$, provided that the modified sets are valid {\areas}.

	Fix $j \in \{1,2,3\}$ and consider $P_j$ which, according to Table~\ref{table:strats}, is defined as the sequence $(A'_1, c'_1), \dots, (A'_s, c'_s)$ with
	\begin{itemize}
		\item $(A'_i, c'_i) = (A_i, c_i)$, for all $i \in [1,s] - [\sigma, \sigma+j]$,
		\item $(A'_\sigma, c'_\sigma) = (A_{\sigma+j}, c_{\sigma+j})$,
		\item $(A'_{\sigma+1}, c'_{\sigma+1}) = (A_\sigma-F_P(\sigma+j), c_\sigma)$, and
		\item $(A'_i, c'_i) = (A_{i-1}-F_P(\sigma+j), c_{i-1})$, for all $i \in [\sigma+2,\sigma+j]$.
	\end{itemize}
	Assuming that all $(A'_i,c'_i)$ are valid {\areas}, let $p'_0,\dots,p'_s$ denote the paintings induced by $P_j$.
	Clearly, $p'_i = p_i$ for all $i \leq \sigma-1$.
	We next show that after stroke $\sigma+j$, every vertex $v$ with $f_P(v) \leq \sigma+j$ has the correct color under $P_j$.

	Vertices outside $A_\sigma + \dots + A_{\sigma+j} + D$ are unaffected.
	For $v$ inside this union with $f_P(v) \leq \sigma+j$, we distinguish two cases:
	\begin{itemize}
    	\item If $v \in A_{\sigma+j}$, then $v \in F_P(\sigma+j)$ and therefore $p'_{\sigma+j}(v) = p_{\sigma+j}(v) = c_{\sigma+j}$, since $v \in A'_\sigma$ and none of the subsequent adjusted {\areas} of $P_j$ overwrites $F_P(\sigma+j)$.
    	\item If $v \in A_{\sigma+i} - \hat{A}_i$, where $\hat{A}_i = A_{\sigma+i+1} + \dots + A_{\sigma+j}$ for some $i \in [0, j-1]$, then $v$ is correctly colored $c_{\sigma+i}$ in $p'_{\sigma+j}$. 
		This is because the adjusted {\area} $A'_{\sigma+i+1} = A_{\sigma+i} - F_P(\sigma+j)$ is a superset of $A_{\sigma+i} - A_{\sigma+j} \supseteq A_{\sigma+i} - \hat{A}_i$ and, thus, contains $v$.
		At the same time, all subsequent adjusted {\areas}
		\begin{align*}
			A'_{\sigma+i+2} &= A_{\sigma+i+1} - F_P(\sigma+j) \subseteq A_{\sigma+i+1} \subseteq \hat{A}_i,\\
			&\ \vdots\\ 
			A'_{\sigma+j} &= A_{\sigma+j-1} - F_P(\sigma+j) \subseteq A_{\sigma+j-1} \subseteq \hat{A}_i
		\end{align*}
		are subsets of $\hat{A}_i$ and, thus, cannot contain or overwrite $v$.
	\end{itemize}
	Thus, any remaining discrepancies concern only vertices $v$ with $f_P(v) > \sigma+j$.
	Since $P$ and $P_j$ coincide thereafter, all such vertices are finalized identically, yielding $p'_s = p_s = t$.

	Finally, consider vertices $v \in D$.
	If $f_P(v) < \sigma$ or $f_P(v) > \sigma+j$, then $f_{P_j}(v) = f_P(v)$ because the strokes in $[1,\sigma-1]$ and $[\sigma+j+1,s]$ are the same for $P$ and $P_j$.
	If $f_P(v) = \sigma$, then $f_{P_j}(v) > f_P(v)$ by construction.
	The case $f_P(v) \in \{x,y,z\}$ is excluded by $\alpha \leq \beta \leq \sigma$ and the assumption on $x,y,z$.
	Hence, the finishing index of $D$ increases under $P_j$, contradicting the choice of $P$.

	\medskip
	It remains to show that at least one of $P_1,P_2,P_3$ is a valid {\plan}.
	That is, for some $j \in \{1,2,3\}$, the adjusted strokes $\sigma,\dots,\sigma+j$ must operate on valid {\areas}.
	Stroke $\sigma$ poses no issue, as it coincides with $(A_{\sigma+j},c_{\sigma+j})$ from $P$.

	We now examine the remaining adjustments.
	Since $P$ is optimal, the sets $F_P(\sigma), F_P(x), F_P(y)$, and $F_P(z)$ are all non-empty.
	Moreover, they are pairwise disjoint because every vertex can be finished only once.
	As $A_\sigma$ is either disjoint with or a superset of $A_i$, we also have $A_\sigma \cap F_P(i) = \emptyset$ or $A_\sigma \supseteq F_P(i)$ for all $i \in \{x,y,z\}$.
	Each graph $G[A_i]$ for $i \in \{\sigma,x,y,z\}$ is a connected cograph, since at least one vertex of every induced $P_4$ has already been finished prior to stroke $i$.

	\smallskip
	Consider $P_1$.
	The only modified stroke is $\sigma+1$, which uses $A_\sigma - F_P(x)$.
	This is a valid {\area} unless $F_P(x)$ separates $G[A_\sigma]$.
	If it does not, $P_1$ is a valid plan and we are done.

	\smallskip
	Otherwise, $F_P(x)$ is a separator of $G[A_\sigma]$, which implies $A_x \cap A_\sigma \not=\emptyset$ and, thus, $A_x \subseteq A_\sigma$.
	Then we turn to $P_2$.
	The set $A_\sigma - F_P(y)$ is connected, since otherwise, $F_P(x)$ and $F_P(y)$ are disjoint separators of $G[A_\sigma]$ and Observation~\ref{obs:cograph_separator_cover} would imply $F_P(x) + F_P(y) = A_\sigma$, leaving no room for the non-empty and disjoint set $F_P(\sigma)$.
	Thus, the second adjustment in $P_2$ is valid.
	If $F_P(y)$ does not separate $G[A_x]$, then $A_x - F_P(y)$ is also connected and $P_2$ is a valid plan.

	\smallskip
	If neither $P_1$ nor $P_2$ is valid, then $F_P(x)$ separates $G[A_\sigma]$ and $F_P(y)$ separates $G[A_x]$.
	This implies that $F_P(y)$ has vertices in $A_x$ and, like above, this means $A_y \subseteq A_x \subseteq A_\sigma$.
	In this case, we consider $P_3$.
	As before, $A_\sigma - F_P(z)$ is connected, since, like for $F_P(y)$, the cograph $G[A_\sigma]$ cannot be separated by $F_P(z)$.
	Similarly, $F_P(z)$ cannot separate $G[A_x]$, as this would imply that $G[A_x]$ has two disjoint separators, $F_P(y)$ and $F_P(z)$, which means $F_P(y) + F_P(z) = A_x$ due to Observation~\ref{obs:cograph_separator_cover}.
	Again, this contradicts the existence of the non-empty $F_P(x) \subseteq A_x$.
	Thus, $A_x - F_P(z)$ is connected.
	
	Finally, consider $A_y - F_P(z)$.
	This set is a valid {\area} unless $F_P(z)$ separates $G[A_y]$.
	Assume for a contradiction that this is the case and that $G[A_y] - F_P(z)$ is disconnected.
	Then, as a separator of $G[A_y]$, $F_P(z)$ has vertices in $A_y$ and, again, we get $A_z \subseteq A_y \subseteq A_x \subseteq A_\sigma$.

	\smallskip
	To derive the contradiction, we begin by selecting a vertex $u \in F_P(\sigma)$, which is not in $F_P(x)$, and a vertex $u' \in A_\sigma$ that is in a different component of $G[A_\sigma] - F_P(x)$ than $u$; clearly, $uu' \not\in E(G)$.
	By Observation~\ref{obs:cograph_separator_pthree} there is a vertex $v$ in the separator $F_P(x)$ that is adjacent to both $u$ and $u'$.
	
	Next, there is a vertex $v' \in F_P(x)$ that is in a different component $C$ of $G[A_x] - F_P(y)$ than $v$.	
	To see this, observe first that $F_P(x)$ dominates $A_\sigma$ due to Observation~\ref{obs:cograph_separator_dominate}.
	With $C \subseteq A_x \subseteq A_\sigma$ and this implies that $F_P(x)$ dominates $C$, as well.
	Moreover, as $F_P(x)$ and $F_P(y)$ are disjoint, it follows that $F_P(x) \subseteq A_x - F_P(y)$.
	Thus, being a dominating set, $F_P(x)$ must have at least one vertex $v'$ in $C$ (which is not in the same connected component with $v$, by design).

	Using Observation~\ref{obs:cograph_separator_pthree}, again, there is a vertex $w \in F_P(y)$ adjacent to $v$ and $v'$.
	Finally, Observation~\ref{obs:cograph_separator_dominate} implies that $F_P(z)$ dominates $A_y$ and, thus, there is a vertex $m \in F_P(z)$ that is adjacent to $w$.

	Observe that, due to the properties of separators and the disjointness of $F_P(\sigma), F_P(x), F_P(y)$, and $F_P(z)$, all vertices of $u,v,v',w,m$ are distinct.
	For $u'$, we are only guaranteed that it is not the same as $u$ and $v$.
	We show that it is valid to assume that $u$ cannot see $w$ and $m$.
	In fact, we show, if at least one of the two edges is present then $u'$ is distinct from and not adjacent to $w$ and $m$.
	Hence, we could safely replace $u$ with $u'$.
	So, if $uw \in E(G)$ then $u' \not= w$, as $uu' \not\in E(G)$.
	Furthermore, the presence of $uw$ means that $u' \not= m$ since otherwise $u,w,m$ would be a path connecting $u$ and $u'$ in $G[A_\sigma]-F_P(x)$.
	Clearly, if $u'$ sees $w$ or $m$ we get a similar connecting path $u,w,u'$ or $u,w,m,u'$.
	The same argument works, if $u$ sees $m$, simply by swapping $w$ and $m$ in the connecting paths.

	Since $G[A_\sigma]$ is $P_4$-free, the path $u,v,w,v'$ cannot be induced and, since $uw,vv' \not\in E(G)$, the edge $uv'$ must be present.
	Similarly, the path $u,v,w,m$ cannot be induced and the absence of $uw$, $um$ implies $vm \in E(G)$.
	But then, $v'm$ is absent since, otherwise, the path $v,m,v'$ connects $v$ and $v'$ in $G[A_x] - F_P(y)$.
	But then the path $u,v',w,m$ is induced.
	This contradiction proves Claim~\ref{cla:canonical_optplan:two}.

	\medskip
	Using the claims, we obviously have $s - \alpha \leq 12$.
	This completes the proof.
\end{proof}

\LemHeads*
\begin{proof}
	We initialize $H_s \gets \emptyset$ and enumerate all subsets $D \subseteq V(G)$ with $|D|=4$.
	For each such set, we discard those $D$ that do not induce a $P_4$ in~$G$.
	For every remaining choice of~$D$, we enumerate all subsets $F \subseteq C$ with $|F| = s-12$.

	For each pair $(D,F)$, we construct the partial {\plan}
	\[
		H_{D,F} = (t^{-1}(c_1)+ D, c_1),\dots,(t^{-1}(c_{s-12})+ D, c_{s-12}),
	\]
	where $c_1,\dots,c_{s-12}$ are the elements of~$F$ ordered according to~$\prec$,
	and add $H_{D,F}$ to~$H_s$.
	Clearly, $H_{D,F}$ is a $(12,\prec)$-canonical {\plan} missing only its final $12$ strokes.
	Hence, it is indeed the head of some complete $(12,\prec)$-canonical {\plan}.

	We now show that $H_s$ contains all required heads.
	Let $P$ be any $(12,\prec)$-canonical {\plan} of length~$s$ with dominating set~$D$, and let $H$ denote its head.
	Since we enumerate all $4$-vertex subsets, the set~$D$ is considered.
	Moreover, we also enumerate exactly the set~$F$ of colors used by~$H$.
	By definition of a canonical head, the strokes of~$H$ are precisely of the form $(t^{-1}(c)+ D,c)$ for $c\in F$.
	Because these strokes are ordered according to~$\prec$, reconstructing them by iterating over~$F$ in this order reproduces~$H$ exactly.
	Thus, $H\in H_s$.

	Finally, the construction runs in polynomial time.
	Enumerating all choices of $D$ and $F$ requires only polynomial time.
	Checking whether a $4$-vertex set induces a $P_4$, sorting~$F$, and constructing the corresponding strokes can all be done efficiently.
	Since at most one head is added for each pair $(D,F)$, maintaining the set~$H_s$ also takes polynomial time.
\end{proof}

\ObsMaximal*
\begin{proof}
	Let $P$ be a $(D,k,\prec)$-canonical {\plan} of length~$s$ that is not maximal.
	Then there exists an index $i \in [1,s]$ and a vertex $v\in V(G)$ such that replacing the area~$A_i$ of the $i$th stroke by $A_i+ v$ yields another valid $(D,k,\prec)$-canonical {\plan}.

	Since the graph has finitely many vertices and strokes, this augmentation process can only be applied finitely many times.
	Once no further vertex can be added to any stroke without violating canonicity, the resulting {\plan} is maximal by definition.
\end{proof}

\LemTails*
\begin{proof}
	We initialize $T_s \gets \emptyset$ and enumerate all sequences $F=(c_1,\dots,c_{12})\in C^{12}$.
	For each such sequence, we enumerate all vertex sequences $V=(v_1,\dots,v_{12})\in V(G)^{12}$.
	From these vertices, we compute sets $V_i$ for $i \in [1,12]$ by collecting all vertices reachable from $v_i$ in the graph
	\[
		G\bigl[t^{-1}(c_i)+ \bigcup_{j=i+1}^{12} V_j\bigr].
	\]
	Intuitively, this consists of all vertices that either must receive color~$c_i$ in the final template or will be recolored later.
	We then add the partial {\plan}
	\[
		T_{F,V} = (V_1,c_1),\dots,(V_{12},c_{12})
	\]
	to~$T_s$.

	To see that $T_s$ contains all desired tails, let $P$ be any maximal $(12,\prec)$-canonical {\plan} of length~$s$, and let
	\[
		T=(A_1,c_1),\dots,(A_{12},c_{12})
	\]
	be its tail.
	Maximality implies that no vertex can be added to any~$A_i$ while preserving canonicity.
	Eventually, the brute-force enumeration produces the color sequence $(c_1,\dots,c_{12})$.
	Moreover, we enumerate a vertex sequence $(w_1,\dots,w_{12})$ with $w_i\in A_i$ for all~$i$.
	Let $T_{F,V}=(A_1',c_1),\dots,(A_{12}',c_{12})$ be the tail constructed from this choice.

	By construction, $w_i\in A_i\cap A_i'$ for all~$i$.
	Assume for contradiction that $A_j\neq A_j'$ for some maximal index~$j$.
	Then $A_k=A_k'$ for all $k>j$.
	For any $v\in A_j$, validity of~$P$ implies that either $t(v)=c_j$ or $v\in A_k$ for some $k>j$.
	By construction of~$A_j'$, this implies $v\in A_j'$, and hence $A_j\subseteq A_j'$.

	Conversely, let $v'\in A_j'$ such that $N(v')$ intersects~$A_j$.
	Then adding $v'$ to~$A_j$ preserves canonicity:
	either $t(v')=c_j$, or $v'$ will be recolored later.
	This contradicts maximality of~$P$.
	Hence, no such index~$j$ exists and $A_i=A_i'$ for all~$i$.
	Thus, $T$ is reproduced correctly.

	The total enumeration of $C^{12}\times V(G)^{12}$ is polynomial.
	Each area $A_i'$ is computed using a bounded number of breadth-first searches on a modified graph, which is polynomial.
	Since only polynomially many tails are generated and stored, $T_s$ can be constructed in polynomial time.
\end{proof}

\LemHeadsAndTails*
\begin{proof}
	By Lemmas~\ref{lem:heads} and~\ref{lem:tails}, the sets $H_s$ and $T_s$ can be generated in polynomial time and have polynomial size.
	We therefore enumerate all pairs $(h,t)\in H_s\times T_s$.

	For each such pair, we verify in polynomial time whether appending~$t$ to~$h$ yields a valid {\plan}.
	All valid combinations are collected into~$P_s$, which can be maintained efficiently since it contains only polynomially many elements.

	It remains to show that every {\plan} in~$P_s$ is maximal.
	Let $P=(A_1,c_1),\dots,(A_s,c_s)\in P_s$ and assume for contradiction that $P$ is not maximal.
	Then there exists a vertex $v\in V(G)$ and an index $i \in [1,s]$ such that adding $v$ to~$A_i$ preserves canonicity.

	Since no head stroke can be extended, we must have $i>s-12$.
	Validity of the extension implies that $N(v)$ intersects~$A_i$ and that either $t(v)=c_i$ or $v\in A_j$ for some $j>i$.
	Hence, $v$ lies in the graph used to construct~$A_i$, and therefore $v\in A_i$ by construction -- a contradiction.

	Thus, all {\plans} in~$P_s$ are maximal $(12,\prec)$-canonical {\plans} of length~$s$.
\end{proof}

\ThmMainResult*
\begin{proof}
	The proof follows from the correctness and efficiency of Algorithm~\ref{alg:co-gem}.

	Let $(G,t)$ be an instance with an optimal {\plan} of length~$s$, and let $\prec$ be an arbitrary ordering of the colors used by~$t$.
	If $G$ is a cograph, correctness follows from Theorem~\ref{cor:painting_cographs}.
	Otherwise, Section~\ref{sec:canonical_paint_plans} shows that $c\leq s\leq c+3$.

	By Lemma~\ref{lem:canonical_optplan} and Observation~\ref{obs:maximal}, there exists a maximal $(D,12,\prec)$-canonical {\plan} of length~$s$ for some induced $P_4$~$D$.
	Thus, Lemma~\ref{lem:heads_and_tails} guarantees that $P_s\neq\emptyset$.
	Conversely, no canonical {\plan} of length $s-1$ exists, so $P_{s-1}=\emptyset$.

	Hence, the first iteration of the loop that encounters a non-empty set is exactly the iteration with $i=s$, and the algorithm returns an optimal {\plan}.

	Polynomial running time follows immediately.
	Cograph recognition is polynomial~\cite{Corneil1985} as is solving according to Theorem~\ref{cor:painting_cographs}.
	The loop has at most four iterations, and Lemma~\ref{lem:heads_and_tails} guarantees polynomial-time generation of each~$P_s$.
\end{proof}

%

\section*{Disclaimer on the Use of AI}

We note that the AI system \textsc{ChatGPT}, in particular model \emph{GPT-5 mini}, has been used in this work for the only purpose of improving the presentation and readability of the text, as the authors are not native English speakers.
No other use of \textsc{ChatGPT} or its subroutines has been made in this paper.
In particular, no AI framework has contributed to any creative decisions, the development of original results, or the construction of rigorous mathematical proofs, all of which were carried out entirely by the authors.

%

\bibliographystyle{abbrv}
\bibliography{floodit.bib}

\end{document}